\newcommand{\setword}[2]{%
	\phantomsection
	#1\def\@currentlabel{\unexpanded{#1}}\label{#2}%
}
\newcommand{\Tr}{\operatorname{Tr}}
\newcommand{\be}{\begin{equation}}
	\newcommand{\ee}{\end{equation}}
\newcommand{\ba}{\begin{eqnarray}}
	\newcommand{\ea}{\end{eqnarray}}
\newcommand{\tr}{\operatorname{Tr}}
\newcommand{\proj}[1]{\ket{#1}\bra{#1}}
\newtheorem{theorem}{Theorem}
\newtheorem{definition}{Definition}
\newtheorem{proposition}{Proposition}
\newtheorem{remark}{Remark}
\def\>{\rangle}
\def\<{\langle}
\begin{document}

\title{Experimental Verification of Many-Body Entanglement Using Thermodynamic Quantities}

\author{Jitendra Joshi}
\thanks{These authors contribute equally in this project and share equal credit of the first authorship.}
\affiliation{Department of Physics and NMR Research Center,\\ Indian Institute of Science Education and Research, Pune 411008, India}

\author{Mir Alimuddin}
\thanks{These authors contribute equally in this project and share equal credit of the first authorship.}
\affiliation{Department of Physics of Complex Systems, S.N. Bose National Center for Basic Sciences, Block JD, Sector III, Salt Lake, Kolkata 700106, India.}

\author{T S Mahesh}
\affiliation{Department of Physics and NMR Research Center,\\ Indian Institute of Science Education and Research, Pune 411008, India}

\author{Manik Banik}
\affiliation{Department of Physics of Complex Systems, S.N. Bose National Center for Basic Sciences, Block JD, Sector III, Salt Lake, Kolkata 700106, India.}

\begin{abstract}
The phenomenon of quantum entanglement underlies several important protocols that enable emerging quantum technologies. Entangled states, however, are extremely delicate and often get perturbed by tiny fluctuations in their external environment. Certification of entanglement is therefore immensely crucial for the successful implementation of protocols involving this resource. In this work, we propose a set of entanglement criteria for multi-qubit systems that can be easily verified by measuring certain thermodynamic quantities. In particular, the criteria depend on the difference in optimal global and local works extractable from an isolated quantum system under global and local interactions, respectively. As a proof of principle, we demonstrate the proposed scheme on nuclear spin registers of up to 10 qubits using the Nuclear Magnetic Resonance architecture. We prepare noisy Bell-diagonal state and noisy Greenberger–Horne–Zeilinger class of states in star-topology systems and certify their entanglement through our thermodynamic criteria. Along the same line, we also propose an entanglement certification scheme in many-body systems when only partial or even no knowledge about the state is available.
\end{abstract}


\maketitle
{\it Introduction.--} Quantum entanglement, identified as a puzzling feature of multipartite quantum systems \cite{Einstein1935,Bohr1935,Schrdinger1936,Bell1966}, plays the pivotal role in a number of important quantum information protocols \cite{Ekert1991,Bennett1992,Bennett1993,Deutsch1985,Yurke1986,Giovannetti2004,Xia2023} (see also \cite{Horodecki2009}). In quantum systems involving more than two parts entanglement appears in different inequivalent and exotic forms \cite{Dur2000,Verstraete2002}, that have been proved to be useful in several distributed protocols \cite{Cirac1997,Raussendorf2001,Arrighi2006,Kimble2008,Wehner2018,Agrawal2019,Saha2020,Rout2021,Banik2021,Bhattacharya2021}. However, entangled states are fragile and easily lost by external perturbations. Successful implementation of the protocols involving entanglement, therefore, demands faithful certification of entanglement. Although the generic {\it separability problem} is known to be extremely hard even for bipartite systems \cite{Gurvits2003}, negative-partial-transposition (NPT) criterion \cite{Peres1996,Horodecki1996} and sometimes measurement of entanglement witness operator \cite{Dirkse2020} become useful for certifying  entanglement. On the other hand, there exist entropic quantities that also serve the purpose of entanglement certification \cite{Bennett1996,Devetak2005}. However, these entropic quantities are not directly measurable in experiments, and calculating the value of witness operator \& evaluating NPT-ness of a state demands complete tomographic knowledge which is practically impossible when large number of subsystems are involved. 

During the recent past, in a completely different approach, researchers are trying to identify operationally motivated thermodynamic quantities that can capture the signature of entanglement in multipartite quantum systems \cite{Perarnau2015,Mukherjee16,Alimuddin2019,Alimuddin2020(1),Alimuddin2020(2),Puliyil2022,Yang'23}. In this work we show that suitably defined functions of such a thermodynamic quantity, namely the ergotropic work, can serve as bona-fide entanglement certifiers for generic $N$-qubit systems. The optimal amount of work extractable from an isolated quantum system by keeping its entropy unchanged is known as ergotropic work \cite{Allahverdyan2004}. Depending upon whether a many-body quantum system is addressed globally or its parts are addressed separately, different kinds of ergotropic works can be extracted. Interestingly, entanglement of the initially prepared multipartite state keeps it footprints in the difference of these global and local ergotropic works. Furthermore, while extracting work one might infer the spectral of the state in question. Depending on the available information about the spectral of the global state and its marginals we propose several entanglement certifiers. As proof of principle, we implement the proposed thermodynamic entanglement criterion on nuclear spin registers of up to $10$ qubits via  Nuclear Magnetic Resonance (NMR) architecture.  In particular, the star-topology systems allow preparation of Greenberger–Horne–Zeilinger (GHZ) class of states in large registers \cite{Jones2009,mahesh2021star,shukla2014noon}. We prepare two-qubit Bell diagonal state and noisy states comprising of singlet/GHZ state and white noise, and certify their entanglement through our proposed criteria. 

{\it Theory and Framework.--} State of an $N$-qubit system is described by a density operator $\rho_{A_1\cdots A_N}\in\mathcal{D}\left((\mathbb{C}^2)^{\otimes N}\right)$; where $\mathcal{D}(\mathbb{H})$ denote the set of positive trace-one operators acting on the Hilbert space $\mathbb{H}$. A state is called fully separable if it is a probabilistic mixture of fully product state, {\it i.e.}, $\rho_{A_1\cdots A_N}=\sum_ip_i\left(\bigotimes_{j=1}^N\ket{\psi^i_{A_j}}\bra{\psi^i_{A_j}}\right)$, with $\ket{\psi^i_{A_j}}\in\mathbb{C}^2_{A_j}\equiv\mathbb{C}^2$. States lying outside the set of fully separable states are entangled. However, different kinds of entanglement are possible in multi-qubit systems. Let $\mathcal{S}[X|X^c]$ denotes the set of states separable across $X$-vs-$X^c$ bipartite cut, where $X$ contain $\kappa$ parties together and $X^c$ contains the remaining parties; $\kappa\in\{1,\cdots N-1\}$. States lying outside $\mathcal{S}[X|X^c]$ contains entanglement across $X$-vs-$X^c$ bipartition. 

When an isolated such system evolves from an initial state $\rho$ to a lower energy state $\sigma$, the difference in energies can be extracted as work. Study of this topic dates back to late seventies \cite{Pusz1978,Lenard1978} and it gains renewed interest in the recent past \cite{Aberg2013,Skrzypczyk2014,Skrzypczyk2015,PhysRevA.106.042601}. Consider an $N$-qubit system governed by non-interacting Hamiltonian $H=\sum_{l=1}^{N}\tilde{H}_{l}$, where $\tilde{H}_{l}:=\mathbf{I}_1\otimes\cdots\otimes\mathbf{I}_{l-1}\otimes H_l\otimes\mathbf{I}_{l+1}\otimes\cdots\otimes\mathbf{I}_N$, with  $H_l=\sum^{1}_{i=0}(E_l+i\alpha_l)|i\rangle\langle i|$ and $\ket{0}$ \& $\ket{1}$ being the energy eigenkets with respective eigenvalues $E_l$ and $E_l+\alpha_l$; $\mathbf{I}_j$ be the identity operator on $j^{th}$ qubit. Evolution from the initial state to final state is governed through a cyclic unitary $U(\tau)$ generated by switching on a time dependent interaction. 

The optimally extractable work, called ergotropy, amounts to $W(\rho)=\tr\left[\rho H\right]-\min_{U(\tau)} \tr\left[U(\tau)\rho U^{\dagger}(\tau)H\right]$, where optimization is considered over all unitaries. As it turns out during optimal work extraction the system evolves to the passive state $\rho^P$, and accordingly we have, $W(\rho):=E(\rho)-E(\rho^P)=\tr\left[\rho H\right]-\tr[\rho^PH]$ \cite{Pusz1978,Lenard1978}. Passive state is the lowest energetic state with spectral identical to the initial state. Moreover, it is diagonal in energy basis where higher energy states are lessly populated. In multipartite scenario different parts of the system can be probed separately leading to several inequivalent configuration for work extraction. For instance, in the $X$-vs-$X^c$ configuration, with $X$ containing $\kappa$ parties together, the optimal extractable work from $X$ subsystem is given by, $W_{[\kappa]}(\rho_X):=\tr[\rho_X H_X]-\tr[\rho^P_X H_X]$, where $\rho_X:=\tr_{X^c}(\rho)\in\mathcal{D}((\mathbb{C}^2)^{\otimes \kappa}),~\rho\in\mathcal{D}((\mathbb{C}^2)^{\otimes N}),~H_X$ is the Hamiltonian of the subsystem $X$, $\rho_X^P$ is the passive state corresponding to $\rho_X$, with $W_{[N]}(\rho)$ simply denoted as $W(\rho)$. 

We will denote the spectral for a generic $N$-qubit state $\rho$ as $\vec{t}_\rho\equiv\{t_j\}_{j=0}^{2^N-1}$, arranged in decreasing order. System's Hamiltonian $H$ can be re-expressed as $H=\sum_{j=0}^{2^N-1}(E_g+n_j)|e_j\rangle\langle e_j|$, where $\ket{e_0}=\ket{0}^{\otimes N}$ is the ground state with energy value $E_g=\sum_{l=1}^NE_l$, and the energy eigenvalues are arranged in increasing order, {\it i.e.} $n_{j+1}\ge n_j,~\forall~j$, with $n_0=0$. The highest exited state $\ket{e_{2^N-1}}=\ket{1}^{\otimes N}$ has energy value $E_g+\sum_{l=1}^N\alpha_l$. Spectral of the subsystem $X$ will be denoted as $\vec{x}_{\rho_X}\equiv\{x_j\}_{j=0}^{2^\kappa-1}$,with its Hamiltonian re-expressed as $H_X=\sum_{j=0}^{2^\kappa-1}(E^X_g+m_j)|f_j\rangle\langle f_j|$. While extracting work in $X$-vs-$X^c$ configuration, we can evaluate the thermodynamic quantity
\begin{align}
\Delta_{X|X^c}:=W(\rho)-W_{[\kappa]}(\rho_X)-E(\rho_{X^c})+E^{X^c}_g.\label{delta}   
\end{align}
Here the first three terms are state dependent and their values can be evaluated through experiment; the last term designates the ground state energy of the Hamiltonian of the $X^c$ part. We are now in a position to provide our thermodynamic entanglement criteria (proof defer to Supplemental section \cite{Supple}).  
\begin{theorem}\label{theo1}
An $N$-qubit state separable across $X$-vs-$X^c$ bipartition satisfies
\footnotesize
\begin{subequations}
\begin{align}
\Delta_{X|X^c}&\le \sum_{i=1}^{2^\kappa-1}(m_i-m_1)x_i+\sum_{i=1}^{2^N-1}(m_1-n_i)t_i:=\delta^{GL}_{X|X^c},\label{ecs}\\
\Delta_{X|X^c}&\le\sum_{i=1}^{2^\kappa-2}(m_i-n_i)t_i+\sum_{i=2^\kappa-1}^{2^N-1}(m_{2^\kappa-1}-n_i)t_i:=\delta^{G}_{X|X^c},\label{ecw}
\end{align}
\end{subequations}
\normalsize
\mbox{where}, $m_{i+1}\ge m_i$ \mbox{for} $i\in\{0,2^\kappa-1\}$, $n_{i+1}\ge n_i$ \mbox{for} $i\in\{0,2^N-1\}$, \mbox{and} $m_{2^\kappa-1}=\sum_{i=1}^\kappa\alpha_i$. 
\end{theorem}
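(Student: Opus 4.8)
The plan is to first collapse $\Delta_{X|X^c}$ to a purely spectral quantity, and then to recognize both inequalities as consequences of a single majorization relation that holds precisely for separable states. First I would exploit the non-interacting structure $H=H_X\otimes\mathbf{I}+\mathbf{I}\otimes H_{X^c}$, which yields the additivity $E(\rho)=E(\rho_X)+E(\rho_{X^c})$ together with $E_g=E_g^X+E_g^{X^c}$. Combined with the explicit passive-state energies, obtained by pairing the decreasing spectrum with the increasing energy levels, $E(\rho^P)=E_g+\sum_{i=1}^{2^N-1}n_i t_i$ and $E(\rho_X^P)=E_g^X+\sum_{i=1}^{2^\kappa-1}m_i x_i$, every energy and ground-state term in the definition \eqref{delta} cancels, leaving the closed form
\begin{align}
\Delta_{X|X^c}=\sum_{i=1}^{2^\kappa-1}m_i x_i-\sum_{i=1}^{2^N-1}n_i t_i.\label{eq:closed}
\end{align}
From here the statement no longer involves the Hamiltonian's absolute scale, only the gaps $\{m_i\},\{n_i\}$ and the two spectra.

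The crucial input is the Nielsen--Kempe disorder criterion: if $\rho$ is separable across $X$-vs-$X^c$, then its global spectrum is majorized by the marginal spectrum, $\vec t_\rho\prec\vec x_{\rho_X}$ (after padding $\vec x_{\rho_X}$ with zeros to length $2^N$). Writing $d_i:=x_i-t_i$ and $D_k:=\sum_{i=0}^k d_i$, this is exactly the family of inequalities $D_k\ge 0$ for every $k$, with $D_{2^N-1}=0$. I would verify the first (global--local) bound by subtracting \eqref{eq:closed} from the right-hand side of \eqref{ecs}; using $\sum_{i=1}^{2^N-1}t_i=1-t_0$ and $\sum_{i=1}^{2^\kappa-1}x_i=1-x_0$ this collapses to $\delta^{GL}_{X|X^c}-\Delta_{X|X^c}=m_1(x_0-t_0)=m_1 D_0\ge 0$, since $m_1\ge m_0=0$ and $D_0\ge 0$. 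Thus \eqref{ecs} follows from the single first-order majorization inequality.

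For the second (global) bound I would again subtract \eqref{eq:closed} and, after rewriting $\sum_{i=2^\kappa-1}^{2^N-1}t_i$ and $x_{2^\kappa-1}$ through the completeness $\sum_i t_i=\sum_i x_i=1$, reduce the difference to
\begin{align}
\delta^{G}_{X|X^c}-\Delta_{X|X^c}=\sum_{i=0}^{2^\kappa-2}\big(m_{2^\kappa-1}-m_i\big)\,d_i.\nonumber
\end{align}
The natural finishing move is Abel summation, which recasts this as $\big(m_{2^\kappa-1}-m_{2^\kappa-2}\big)D_{2^\kappa-2}+\sum_{i=0}^{2^\kappa-3}(m_{i+1}-m_i)D_i$; every coefficient $m_{i+1}-m_i$ and $m_{2^\kappa-1}-m_{2^\kappa-2}$ is nonnegative because the gaps are ordered, and every $D_i\ge 0$ by majorization, so the expression is nonnegative term by term. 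I expect the main obstacle to be conceptual rather than computational: identifying that the relevant separability footprint is precisely the Nielsen--Kempe majorization of $\rho$ by $\rho_X$, and then matching its partial-sum inequalities to the particular weightings of $\{m_i\}$ appearing in \eqref{ecs} and \eqref{ecw}. Once that correspondence is spotted, the summation-by-parts bookkeeping and the sign tracking are routine.
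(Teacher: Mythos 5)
Your proposal is correct and follows essentially the same route as the paper's proof: both reduce $\Delta_{X|X^c}$ to the spectral expression $\sum_{i}m_i x_i-\sum_i n_i t_i$ via additivity of the non-interacting Hamiltonian, then invoke the Nielsen--Kempe majorization $\vec x_{\rho_X}\succ\vec t_\rho$, using only the first partial-sum inequality $x_0\ge t_0$ for the global--local bound and the full family of partial-sum inequalities combined with Abel summation over the gap increments $m_{j}-m_{j-1}$ for the global bound. The only difference is presentational: the paper inserts the tail-sum inequalities directly into a rewritten $\sum_j m_j x_j$, whereas you subtract first and exhibit $\delta^{G}_{X|X^c}-\Delta_{X|X^c}$ as a manifestly nonnegative combination of the majorization partial sums.
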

Violation of any of the conditions in Theorem \ref{theo1} certifies  entanglement across $X-$vs$-X^c$ bipartition. Please note that, examining condition (\ref{ecs}) requires knowledge of the global and local spectrals $\vec{t}_\rho$ and $\vec{x}_{\rho_X}$. Quite interestingly this thermodynamic criterion turns out to be a special case of Nielsen-Kempe entanglement criterion \cite{Nielsen2001} (see Remark {\bf 1} in \cite{Supple}). The separability bound $\delta^{G}_{X|X^c}$ in (\ref{ecw}) depends only on the global spectral of the given state (and hence the superscript `G') and generally turns out to be a weaker than (\ref{ecs}). One can also come up with a thermodynamic entanglement criterion that does not involve the knowledge about the state, albeit it will be weaker than the state dependent criteria  (see Remark {\bf 2} in \cite{Supple}). The state independent separability bound will be denoted as $\delta^I_{X|X^c}$. For instance, for the identical $3$-qubit noisy GHZ states $\rho_\lambda[3]:=(1-\lambda)\mathbb{I}/8+\lambda\ket{\psi_3}\bra{\psi_3}$, where $\ket{\psi_3}:=(\ket{000}+\ket{111})/\sqrt{2}$ and all the subsystems are governed through identical Hamiltonian, criterion (\ref{ecw}) can detect entanglement for $\lambda>3/7$, whereas a state independent criterion  detects entanglement for $\lambda>0.66$. In remaining part of the manuscript we investigate the aforesaid thermodynamic entanglement criteria by preparing specific classes of entangled states in NMR setup.
\begin{figure}[t!]
\centering
\includegraphics[width=8.5cm]{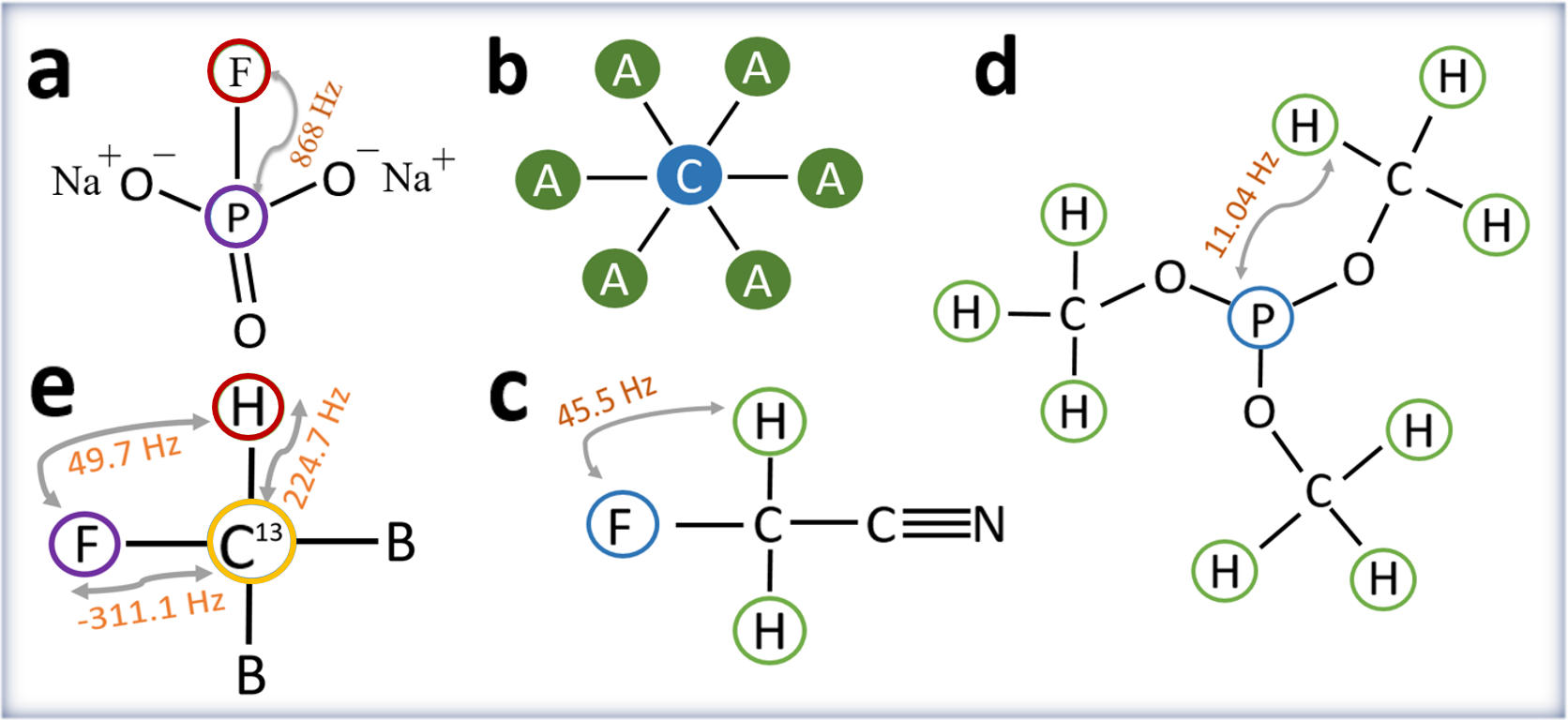}
\caption{(Color Online) (a) $2$-qubit system sodium fluoro-phosphate (NAFP, with $^{19}$F and $^{31}$P being first and second qubits) used in Experiment I. (b) The star topology configuration, wherein each ancillary spin A interacts with the central spin $^{13}$C. (c,d)  $3$-qubit star-system fluoroacetonitril (FAN) and $10$-qubit star-system  trimethyl-phosphate (TMP) used in  Experiment II. (e) Di-bromo fluoromethane (DBFM, with $^{1}$H, $^{13}$C, and $^{19}$F being first, second, and third qubits) used in Experiment III.}\label{fig1}
\vspace{-.5cm}
\end{figure}

{\it Experiment I: (Two-qubit Bell diagonal states).--} In our first experiment we deal with two-qubit Bell diagonal states prepared using the system NAFP (Fig. \ref{fig1} (a)) dissolved in D$_2$O. All the experiments were carried out on a $500$ MHz Bruker NMR spectrometer at an ambient temperature of $300$ K. Hamiltonian of the system consisting of the internal part and the RF drive reads as $H_{12} = H_{12}^{\mathrm{int}} +H_{12}^{\mathrm{RF}}$, where $H_{12}^{\mathrm{int}} =  - \omega_F I_{1z} - \omega_P I_{2z} + 2\pi J I_{1z}I_{2z} $ and $H_{12}^{\mathrm{RF}}= \Omega_F(t) I_{1x}+\Omega_P(t) I_{2x}$, with $I_{ix}=\sigma_{ix}/2,I_{iy}=\sigma_{iy}/2,I_{iz}=\sigma_{iz}/2$. Here $(\omega_F$, $\omega_P)$ and ($\Omega_F$, $\Omega_P$) respectively denote the Larmor frequencies and RF amplitudes of ($^{19}$F, $^{31}$P), $\hbar = 1$, and $J$ is the scalar coupling constant. We prepare Bell diagonal states with two independent controllable parameters $\beta$ and $\gamma$.  We start with the thermal state, which under high-field, high-temperature approximation 
reads as $\rho_{th} = \mathbf{I}/4 + \epsilon_P (\frac{\gamma_F}{\gamma_P}I_{1z} + I_{2z}$), where  $\epsilon_P = \gamma_PB_0/ 4k_BT$. The method of spatial averaging yields us psuedo-pure state (PPS) $\ket{11}\bra{11}^{pps} = (1-\epsilon)\mathrm{I}/4 + \epsilon\ket{11}\bra{11}$, with $\epsilon << 1$ (Fig.\ref{fig2} (a)). Within the paradigm of PPS we set $\epsilon = 1$ and realize the effective state $\ket{11}\bra{11}$ \cite{cory1997ensemble}. Subsequently we prepare the two-parameter Bell diagonal state
\begin{align}
&~~~~~~~~~~~~\rho_{12}= \sum_{i,j=0}^{1}p_{ij}\ket{\mathcal{B}_{ij}}\bra{\mathcal{B}_{ij}},\\
\ket{\mathcal{B}_{0j}}&:=(\ket{00}+(-1)^j\ket{11})/\sqrt{2},~p_{00}:=[\mathrm{S}(\beta/2)\mathrm{S}(\gamma/2)]^2,\nonumber\\
\ket{\mathcal{B}_{1j}}&:=(\ket{01}+(-1)^j\ket{10})/\sqrt{2},~p_{01}:= [\mathrm{S}(\beta/2)\mathrm{C}(\gamma/2)]^2,\nonumber\\
p_{10}&:=[\mathrm{C}(\beta/2)\mathrm{S}(\gamma/2)]^2,~~~~p_{11}:= [\mathrm{C}(\beta/2)\mathrm{C}(\gamma/2)]^2,\nonumber
\end{align} 
where $\mathrm{S}(\star):=\sin(\star)$ and $\mathrm{C}(\star):=\cos(\star)$. 
Further details on the preparation circuit are provided in \cite{Supple}.
To evaluate the quantity in Eq.(\ref{delta}), we evolve the state $\rho_{12}$ into its passive state, the lowest energetic state. While this requires an optimization over all possible unitary operations, in Supplemental material we argue that for generic two-qubit Bell-diagonal state this can be achieved by considering only $24$ permutation operations ($U_\mathrm{per}$). This way we obtain the value of the quantity $\Delta^{Expt}_{1|2}$ experimentally. Notably as the number of qubits increases, the optimization over the set of possible unitary operations expands significantly, and hence the scalability issues persists. Nonetheless, for two-qubit case, arranging $\{p_{ij}\}$ in descending order and denoting the resulting vector as $\vec{t}\equiv\{t_k\}_{k=0}^3$, theoretically we have 
\begin{align}
\left\{\!\begin{aligned}
\Delta_{1|2}= (1.162 - 2.324t_2-3.324t_3-t_1)\omega_P,\\
\delta^G_{1|2}= (1.324t_1-t_3)\omega_P,~~~     \delta^I_{1|2}= 0.662\omega_P
\end{aligned}\right\}.\label{exp1}
\end{align}
Note that, evaluation of the quantity $\delta^G_{1|2}$ in Eq.(\ref{exp1}) demands knowledge of the global spectral, whereas $\delta^I_{1|2}$ is state independent. Entanglement is certified whenever $\Delta_{1|2}$ is strictly greater than any one of these quantities. Varying the parameter $\beta$ and $\gamma$ we show the entanglement certification in Fig.\ref{fig2}. through `gradient-color-plot'. As expected and also evident from the plot, the state independent certification scheme turns out to be weaker than the state dependent scheme. For instance, the specific values of $\beta=2\pi/5$ and $\gamma=3\pi/10$ yield $\Delta_{1|2}=0.338\omega_P$, which is strictly less than $\delta^I_{1|2}=0.662\omega_P$, but greater than $\delta^G_{1|2}=0.292\omega_P$. This is not visible in Fig.\ref{fig2} (b)-(c) due to limited pixel resolution, but can be seen in Fig.\ref{fig2} (d). Important to note that, our entanglement certification scheme does not require tomographic knowledge of the state, rather it is obtained by evaluating expected energies of the given state and the unitarily evolved state. More specifically, our thermodynamic entanglement criteria can certify entanglement in the given state without requiring the information of the population frequencies for different energy eigenstates.         
\begin{figure}
\centering
\includegraphics[width=8.5cm]{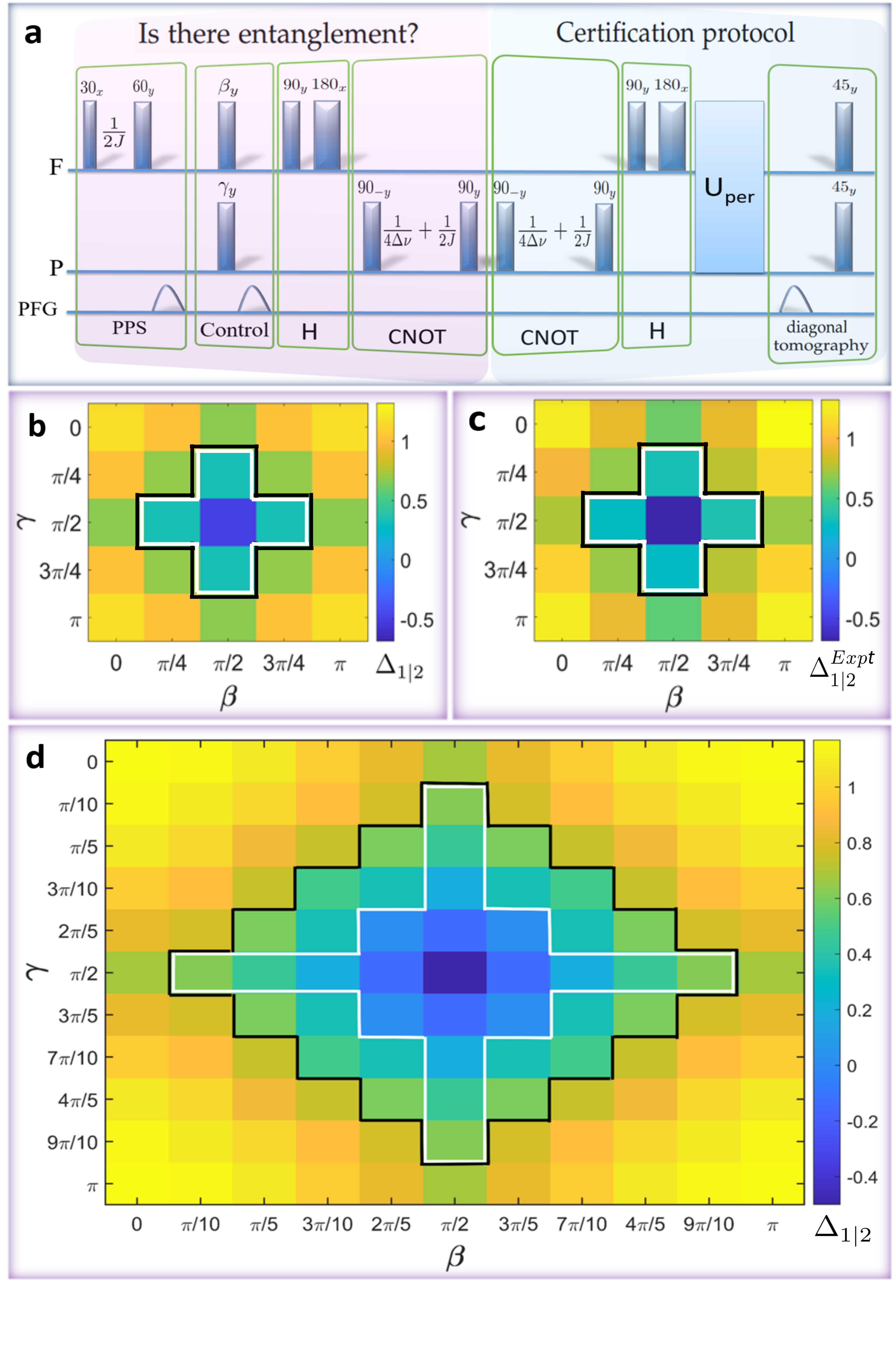}
\caption{(Color Online) (a) The NMR pulse sequence to prepare Bell diagonal state with control parameter $\beta$ and $\gamma$. Here PFG is the Pulsed-Field Gradient and $\Delta \nu$ is the resonance offset of both $^{19}$F and $^{31}$P.  
One of the operations [$U_\mathrm{per}\circ H\circ \mathrm{CNOT}$] takes the Bell diagonal state to its passive. Detailed explanation of this sequence is given in the supplemental. (b) Gradient-color-plot for theoretical values of $\Delta_{1|2}$ (in units of $\omega_P$) vs the control parameters $\beta$ and $\gamma$. States outside the white line (inner-perimeter) are entangled as $\Delta_{1|2} > \delta^G_{1|2}$. For the states outside the black line (outer-perimeter) $\Delta_{1|2} > \delta^I_{1|2}$ and hence they are also entangled. (c) Gradient-color-plot for the experimental values of $\Delta^{Exp}_{1|2}$ with estimated  errors of $\pm 0.1 \omega_P$. Here error originates both from the spin system as well as the NMR hardware, accordingly we have estimated the random error from the experimental NMR spectrum corresponding to the least signal-to-noise ratio providing a useful upper bound for errors. (d) $11\times11$ pixel theoretical gradient-color-plot of $\Delta_{1|2}$: evidently state independent certification scheme is weaker than the state dependent scheme.}\label{fig2}
\vspace{-.5cm}
\end{figure}
\begin{figure}
\centering
\includegraphics[trim={8cm 99cm 0cm 40cm},clip,width=9cm]{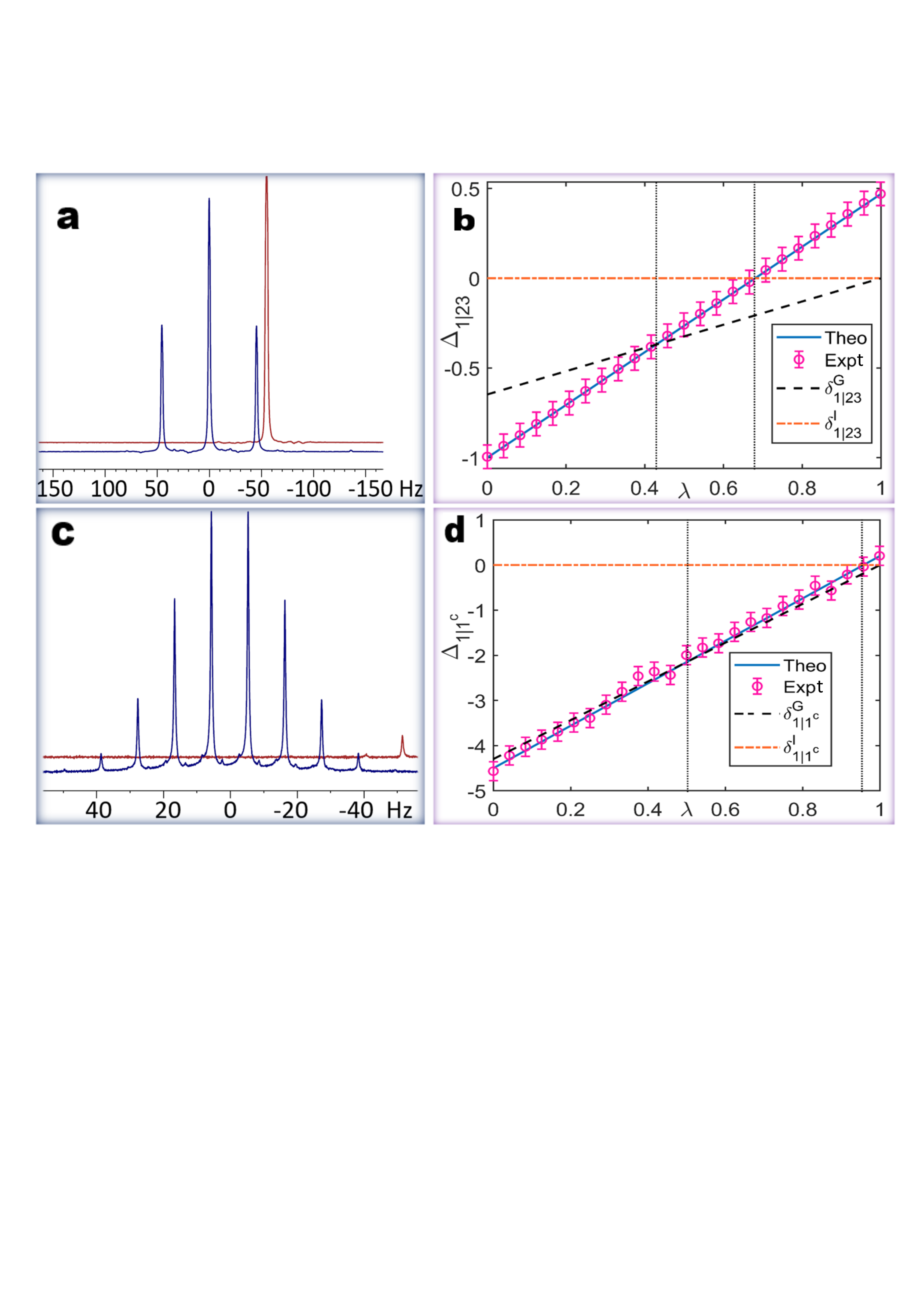}
\caption{(Color Online) (a) $^{19}$F spectra of FAN corresponding to one-pulse experiment on thermal state (front) and to the three-qubit GHZ state (back). (b) Plot $\Delta_{1|23}$ (in the unit of $ \omega_H$) vs purity $\lambda$ for $3$-qubit noisy GHZ states. Comparing the values of $\Delta_{1|23}$ and $\delta^G_{1|23}$, $\delta^I_{1|23}$, we identify the threshold values marked by the dotted lines: $\lambda = 3/7$ and $\lambda = 0.68$, respectively. Above these thresholds, the state exhibits entanglement. (c) $^{31}$P spectra of TMP corresponding to one pulse experiment (front), and to GHZ (back) . (d) $\Delta_{1|1^c}$ vs purity $\lambda$ for the $10$-qubit GHZ class.  Here $\lambda = 0.499$ and $\lambda = 0.957$ for $\delta^G_{1|1^c}$ and $\delta^I_{1|1^c}$ marks the entanglement threshold boundary. The errorbar represents the estimated random error due to noise.} \label{fig3}
\vspace{-.5cm}
\end{figure}

{\it Experiment II: (Multi-qubit systems).--} Multi-qubit entangled states within NMR architecture can be prepared in star topology register (STR) \cite{mahesh2021star}. STR involves a central qubit C (1st qubit) uniformly interacting with a set of $N-1$ identical satellite qubits A (see Fig. \ref{fig1}(b)). Central qubit can be selectively addressed as it is realized by a different nuclear isotope. The ancillary qubits being indistinguishable can be addressed globally only. STR allows efficient preparation of entangled GHZ state \cite{Jones2009}. The STR Hamiltonian along with the pulse sequence dynamics is described in Supplemental part.

We carry out experiments on the following two systems: (i) $3$-qubit STR using FAN, wherein $^{19}$F spin is the central qubit and two $^1$H spins are the satellite qubits, with $J_{CA} = 45.5$ Hz; (ii) $10$-qubit STR using TMP, wherein $^{31}$P spin is the central qubit and nine $^1$H spins are the satellite qubits, with $J_{CA} = 11.04$ Hz (see Fig. \ref{fig1}). After preparing the noisy state $\rho_\lambda[N]=(1-\lambda)\mathbf{I}/2^N+\lambda\ket{\psi_N}\bra{\psi_N}$, with $\ket{\psi_N}:=(\ket{0}^{\otimes N}+\ket{1}^{\otimes N})/\sqrt{2}$, we test entanglement across C-vs-A bipartition considering $N = 3$ and $10$ respectively for FAN and TMP. In doing that, we experimentally determine the thermodynamic quantity $\Delta_{1|1^c}$ along with global spectral dependent separability bound $\delta^G_{1|1^c}$ and state independent bound $\delta^I_{1|1^c}$. Subsequently, we find out the ranges of $\lambda$ for which $\Delta_{1|1^c}$ exceeds these bounds, and accordingly entanglement across C-vs-A gets certified. Experimental results are shown in Fig. \ref{fig3}, and the detailed analysis is deferred to the Supplemental section. Importantly, separability bounds generally rely on the Hamiltonian of the individual systems. This is noteworthy for the state-independent bound in particular. As the central qubit has a greater energy gap between the ground and excited states, the range of entanglement proportionally expands (see Remark {\bf 3} in Supplemental).

{\it Experiment III: (Global vs Global-Local separability bounds).--}
Here we experimentally establish the difference between global-local separability bound of condition (\ref{ecs}) and global separability bound of condition (\ref{ecw}) as stated in Theorem \ref{theo1} using DBFM (see Fig. \ref{fig1}(e)), dissolved in Acetone-d6 as the three-qubit spin system.  Here 1H, 13C are together treated as $X$, whereas 19F is treated as $X^c$. The procedure to obtain global and local spectral is discussed in the supplemental materials \cite{Supple}. In Fig. \ref{fig4} we plot $\Delta_{12|3}$ (experimental as well as theoretical), $\delta^{GL}_{12|3}$, $\delta^{G}_{12|3}$, and $\delta^{I}_{12|3}$ against the noise parameter $\lambda$.
\begin{figure}[t!]
\centering
\includegraphics[width=8cm, height=5cm]{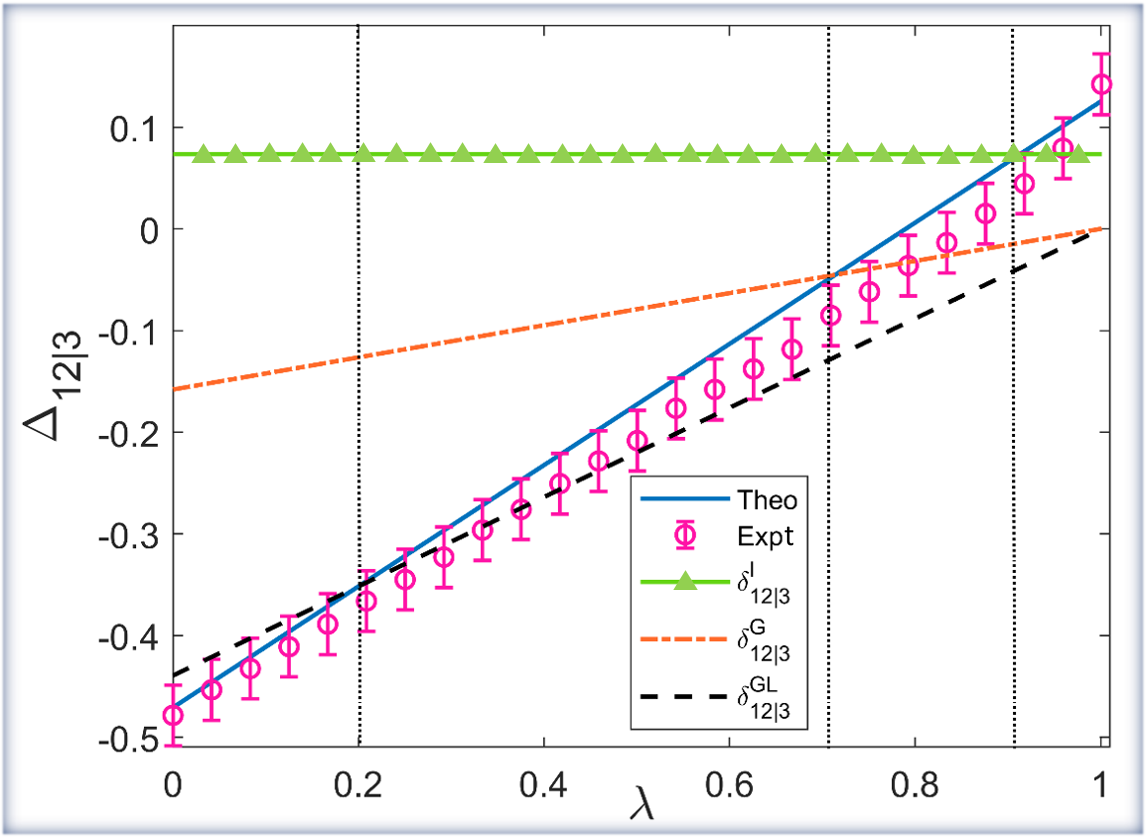}
\caption{(Color Online) Plot of $\Delta_{12|3}$ (experimental as well as theoretical), $\delta^{GL}_{12|3}$, $\delta^{G}_{12|3}$, and $\delta^{I}_{12|3}$ against the noise parameter $\lambda$. The state independent, the global, and the global-local separability conditions certify entanglement for the parameter ranges $\lambda> 0.91$, $\lambda> 0.71$, and $\lambda>1/5$ respectively, establishing hierarchy among these conditions.}.\label{fig4}
\vspace{-.5cm}
\end{figure}

{\it Discussion.--} Manipulating entanglement efficiently in multipartite system is essential for emerging quantum technologies that involve distributed quantum information protocols. Continuous research effort is going on to this aim with different quantum architectures \cite{Raimond2001,Kok2007,Wendin2017,Lu2015,Xiang2013}. Certifying entanglement is a crucial step for successful implementation of many quantum protocols. Along with device independent certification scheme through Bell tests \cite{Aspect1981,Aspect1982(1),Aspect1982(2),Bouwmeester1999,Pan2000,Zhao2003}, there exists device dependent witness based method of entanglement certification \cite{Chen2016,Loredo2016}. However, implementing those methods are quite challenging in practice when the quantum systems are composed of many subsystems.  

In that respect our proposed thermodynamic criteria are less demanding. It provides a way to certify entanglement by measuring global and local ergotropic works. We experimentally validate the proposed thermodynamic entanglement criterion in NMR architecture by considering particular classes  of $2$-qubit, $3$-qubit, and $10$-qubit noisy entangled states. In comparison with \cite{Singh_2018} that relies on full density state tomography of pure states and \cite{xue2022experimental} that is limited to bipartite system, our method achieves certification for multi-qubit mixed states with three different bounds based on the thermodynamic quantifiers of the system. Our thermodynamic approach opens up an easy avenue to certify entanglement even when the knowledge about the state in question is not available. While we have invoked the pseudopure paradigm for our ensemble architecture, similar protocols can be easily setup for other architectures with access to different degrees of state purity.  For instance, entanglement enhanced quantum sensing by optical probes \cite{Xia2023} or NV centers \cite{Xie2021} may be benefited from prior certification of entanglement. At this point, we would like to point out that study of ergotropy is constantly advancing with different quantum architectures, such as optical mode \cite{Andolina2019} and bosonic Gaussian models \cite{Francica2020,Tirone2021}. It will be therefore interesting to test entanglement in those physical systems using our proposed criteria. The recent work of \cite{Safranek2023} wherein coarse-grained measurement scheme is proposed is worth-mentioning at this point. Our study welcomes a number of other questions for future research. For instance, generalizing our criteria for systems with arbitrary local dimensions and generalizing to capture more exotic kinds of entanglement, such as genuine multi-partite entanglement, would be quite important. While the local passivity in our case is studied under local unitary operations, more general notion of strong local passivity is introduced by considering more general local quantum operation \cite{Frey2015,Alhambra2019,Katiyar2023}. Obtaining entanglement certification criteria under this generic consideration could also be quite interesting. 

{\bf Acknowledgments}: JJ acknowledges support from CSIR (Council of Scientific and Industrial Research) fellowship. MA and MB acknowledge funding from the National Mission in Interdisciplinary Cyber-Physical systems from the Department of Science and Technology through the I-HUB Quantum Technology Foundation (I-HUB QTF) (Grant no: I-HUB/PDF/2021-22/008). TSM acknowledges funding  from DST/ICPS/QuST/2019/Q67 and I-HUB QTF. MB acknowledges support through the research grant of INSPIRE Faculty fellowship from the Department of Science and Technology, Government of India and the start-up research grant from SERB, Department of Science and Technology (Grant no: SRG/2021/000267).   

\onecolumngrid
\hypersetup{linkcolor=black}
\tableofcontents
\section{Majorization based entanglement criteria}\label{app-a}
In this section we briefly review the majorization based entanglement criteria as relevant to the present work. The concept of majorization has been extensively studied in mathematics \cite{Marshall1979}, and its applications span across various domains, including quantum information theory. For instance, majorization plays a crucial role in detecting bipartite entanglement (through Nielsen-Kempe criteria \cite{Nielsen2001}), quantum state transformation \cite{Nielsen1999, Horodecki2003, Winter2016}, quantum thermodynamics \cite{Horodecki2013}, and more. Here, we shortly recall the concept of majorization. We will denote a probability distribution $\{p_i\}_{i=0}^{n-1}$ as a vector $\Vec{p}\equiv\{p_i\}_{i=0}^{n-1}\in \mathbb{R}^n$ with the elements arranged in decreasing order, {\it i.e.}, $p_{i+1}\ge p_i,~\forall~i\in\{0,\cdots,n-1\}$.
\begin{definition}
	A probability distribution $\Vec{p}$ majorizes another probability distribution $\Vec{q}$, denoted as $\Vec{p}\succ\Vec{q}$, {\it if and only if}  
	\begin{align}
		\sum^k_{i=0}p_i\geq\sum^k_{i=0}q_i, ~ \forall~k \in \{0,\cdots n-2\};~\&~~
		\sum^{n-1}_{i=0}p_i =\sum^{n-1}_{i=0}q_i.
	\end{align}
\end{definition}
If dimensions of the two vectors are not same extra zeros need to be appended to check their majorization relation. 
\begin{definition}
	A quantum state $\rho\in\mathcal{D}(\mathbb{X})$ majorizes another quantum state $\sigma\in\mathcal{D}(\mathbb{Y})$, denoted as $\rho\succ \sigma$, {\it if and only if} spectral vector $\vec{p}_\rho$ of the state $\rho$ majorizes the spectral vector $\vec{p}_\sigma$ of the state $\sigma$, {\it i.e.}, $\vec{p}_\rho\succ\vec{p}_\sigma$.
\end{definition}
Interestingly, Nielsen and Kempe provided a useful separability criteria based on majorization \cite{Nielsen2001}.\\
{\bf Nielsen-Kempe criteria}: Any bipartite separable state $\rho_{AB}\in\mathcal{D}(\mathbb{X}_A\otimes\mathbb{Y}_B)$ satisfies 
\begin{align}
	\vec{p}_{\rho_{A}}\succ\vec{p}_{\rho_{AB}},~~\&~~\vec{p}_{\rho_{B}}\succ\vec{p}_{\rho_{AB}}.  
\end{align}
\begin{figure}[t!]
	\centering
	\includegraphics[trim={0cm 0cm 0cm 0cm},clip=,width=8cm]{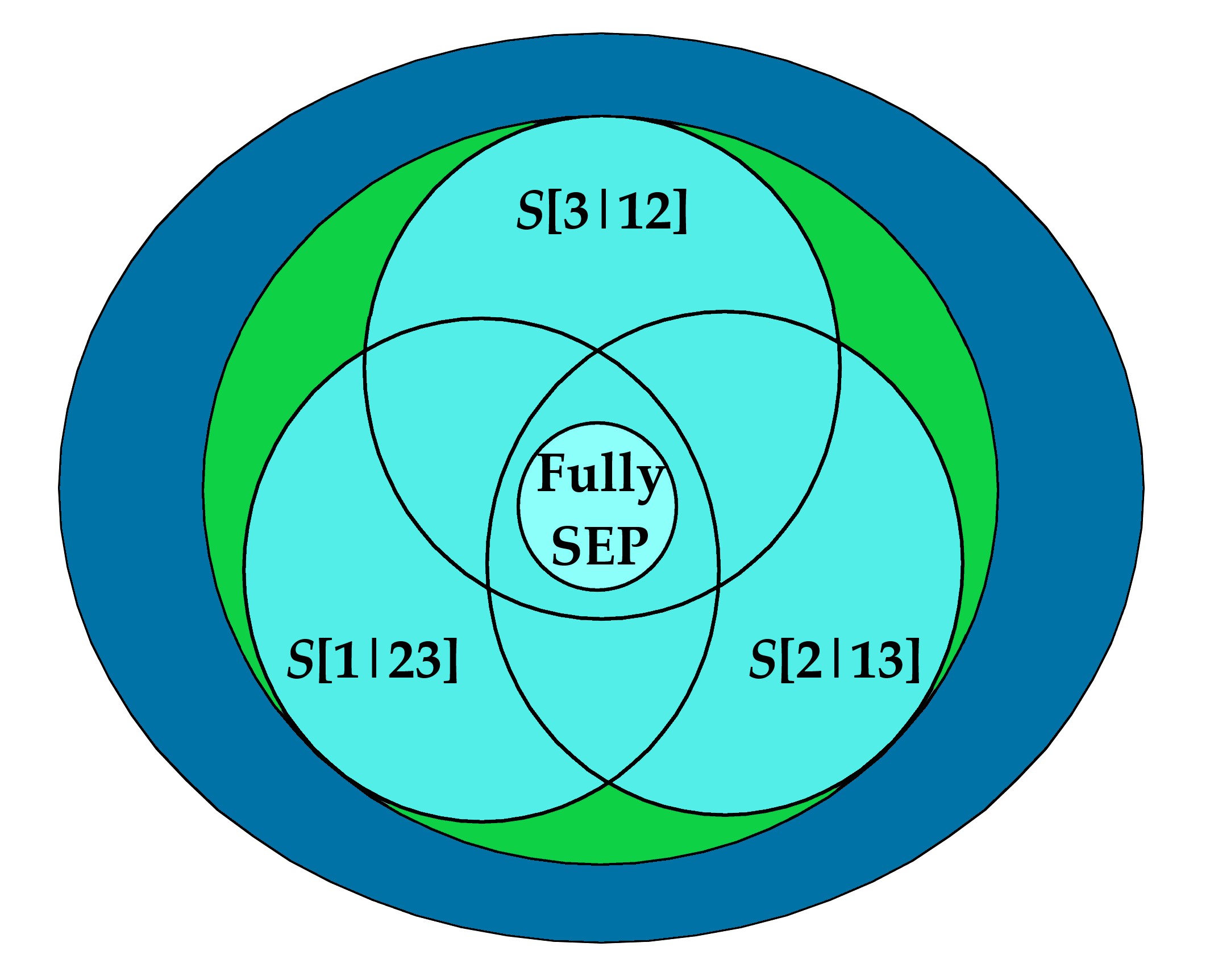}
	\caption{Entanglement in three qubit noisy GHZ states: The diagram illustrates various types of entanglement within three qubit systems. The large blue disc represents the complete state space of the three qubit system. The subset $\mathcal{S}[X|X^c]$, where $X \in {1,2,3}$, denotes a convex set that encompasses states separable across the partition of $X$ versus $X^c$. A state is considered bi-separable if it resides within $\mathrm{CovHul}\{\mathcal{S}[1|1^c],\mathcal{S}[2|2^c],\mathcal{S}[3|3^c]\}$. On the other hand, if a state can be expressed as a convex combination of tripartite product states it is called fully separable. It is worth noting that the intersecting region $\cap^3_{X=1}\mathcal{S}[X|X^c]$ is separable across all partitions. Interestingly, it is known that the set of fully separable states is a strict subset of $\cap^3_{X=1}\mathcal{S}[X|X^c]$ \cite{Bennett1999}. In this analysis, we examine the entanglement region of three qubit noisy GHZ states. These states are fully separable if and only if $0\leq p \leq \frac{1}{5}$. For $\frac{1}{5} < p \leq \frac{3}{7}$, the states are bi-separable but only reside in the green region, which re-establishes the fact that the set  $\mathrm{CovHul}\{\mathcal{S}[1|1^c],\mathcal{S}[2|2^c],\mathcal{S}[3|3^c]\}$ strictly contains $\cup_{X=1}^3\mathcal{S}[X|X^c]$. Conversely, states are genuinely entangled if and only if $p > \frac{3}{7}$.}\label{figs1}
\end{figure}
Here $\rho_A:=\Tr_B[\rho_{AB}]$ and $\rho_B:=\Tr_A[\rho_{AB}]$. Violation of any one of these criteria implies the state $\rho_{AB}$ is entangled. This condition can also be applied to certify entanglement across  any bipartition of a multipartite state $\rho_{A_1\cdots A_N}\in\mathcal{D}\left(\otimes_{i}\mathbb{C}^{d_i}\right)$. Recall that, $\mathcal{S}[X|X^c]$ denotes the set to states that are separable across $X$-vs-$X^c$ cut, with $X$ containing $\kappa$ parties and $X^c$ containing the remaining parties; and Nielsen-Kempe criteria can be applied by considering the global state $\rho_{A_1\cdots A_N}$ and the parts $\rho_X$ \& $\rho_{X^c}$. A state is genuinely entangled if it lies outside the convex hull ($\mathrm{CovHul}$) of all biseparable states. For instance, consider the $N$-qubit noisy GHZ states
\begin{equation}
	\rho_{(\lambda,N)}= \lambda \proj{\psi}_{\small{GHZ}}+(1-\lambda)\frac{I}{2^N}\label{noisyGHZ}, 
\end{equation}
where $|\psi\rangle_{GHZ}=\frac{1}{\sqrt{2}}|0\rangle^{\otimes N}+\frac{1}{\sqrt{2}}|1\rangle^{\otimes N}$ and $\lambda\in[0,1]$. The states in this class are fully separable {\it if and only if} $0\leq p \leq \frac{1}{1+2^{N-1}} $ \cite{Dur00}, and genuinely entangled {\it if and only if} $\frac{1-2^{1-N}}{2-2^{1-N}}<p\leq 1$ \cite{Guhne2010}. Using Nielsen-Kempe criteria it can be concluded that the state exhibits entanglement across the $(N-1)$-vs-$1$ partition when the value of $p>\frac{1}{1+2^{N-1}}$. In the intermediate region, specifically when $\frac{1}{1+2^{N-1}}<p\leq \frac{1-2^{1-N}}{2-2^{1-N}}$, the state is observed to be bi-separable across the $(N-1)$-vs-$1$ partition. Nevertheless, due to the symmetrical nature of this class, all the $(N-1)$-vs-$1$ partitions exhibit entanglement but not genuine entanglement (refer to Fig. \ref{figs1}).

\section{Thermodynamic criteria of Entanglement}\label{app-b}
In this section, we will present the proof of Theorem \textcolor{red}{1} and explore the connection between spectral-dependent thermodynamic criteria, as presented in Theorem \textcolor{red}{1}, and the Nielsen-Kempe criteria. Additionally, we will examine the potential utility of certain thermodynamic quantities in detecting entanglement in multipartite systems.  We will also derive some state independent thermodynamic entanglement criteria by considering an explicit form of the system's Hamiltonian for the systems on which we perform the experiments.

\subsection{Spectral-dependent thermodynamic criteria of entanglement}
\subsubsection{Proof of Theorem 1}
\begin{proof}
	The quantity $\Delta_{X|X^c}$ in Eq.(\textcolor{red}{1}) reads as
	\begin{align}
		\Delta_{X|X^c}
		& =\{\Tr[\rho H]-\Tr[\rho^p H]\}-\{\Tr[\rho_X H_X]-\Tr[\rho^p_X H_X]\}\nonumber\\
		&~~~~~~~~~~~-E(\rho_{X^c})+E^{X^c}_g \nonumber\\
		& = \sum^{2^{\kappa}-1}_{i=0}m_jx_j - \sum^{2^{N}-1}_{i=0}n_jt_j + E^{X^c}_g + E^{X}_g - E_g \nonumber\\
		\Delta_{X|X^c}& = \sum^{2^{\kappa}-1}_{i=1}m_jx_j - \sum^{2^{N}-1}_{i=1}n_jt_j\label{thermquant}.
	\end{align}
	According to Nielsen-Kempe separability criteria a state separable across $X$-vs-$X^c$ cut satisfies  $\rho_X\succ\rho$, {\it i.e.}, 
	\begin{align}
		x_0\geq t_0, ~~
		\Rightarrow \sum^{2^{\kappa}-1}_{i=1}x_i \leq\sum^{2^{N}-1}_{i=1}t_i~.\label{xmajo}
	\end{align}
	Substituting Eq.(\ref{xmajo}) in Eq.(\ref{thermquant}) we obtain
	\begin{align*}
		\Delta_{X|X^c}\le \sum_{i=1}^{2^\kappa-1}(m_i-m_1)x_i+\sum_{i=1}^{2^N-1}(m_1-n_i)t_i := \delta^{GL}_{X|X^c}.\label{GLSDB}
	\end{align*}
	This is the entanglement criterion (\textcolor{red}{2a}) of Theorem \textcolor{red}{1}. Evaluation of this criterion requires information about the spectral of the global state as well as spectral of the X-marginal. One can, however, achieve a entanglement criterion depending on the global spectral only. For that, rewrite Eq.(\ref{thermquant}) as
	\begin{align*}
		\Delta_{X|X^c}&=  \sum^{2^{\kappa}-1}_{i=1}m_jx_j + \sum^{2^\kappa-1}_{j=1}(m_j-m_{j-1})\sum^{2^\kappa-1}_{i=j}x_i-\sum^{2^\kappa-1}_{j=1}(m_j-m_{j-1})\sum^{2^\kappa-1}_{i=j}x_i- \sum^{2^{N}-1}_{i=1}n_jt_j~.
	\end{align*}
	Substituting the separability condition $\sum^{2^\kappa-1}_{i=j}x_i\leq \sum^{2^N-1}_{i=j}t_i$ we obtain,
	\begin{align}
		\Delta_{X|X^c}&\leq \sum^{2^{\kappa}-1}_{i=1}m_jx_j + \sum^{2^\kappa-1}_{j=1}(m_j-m_{j-1})\sum^{2^N-1}_{i=j}t_i-\sum^{2^\kappa-1}_{j=1}(m_j-m_{j-1})\sum^{2^\kappa-1}_{i=j}x_i- \sum^{2^{N}-1}_{i=1}n_jt_j\nonumber\\
		&=\sum^{2^\kappa-1}_{j=1}(m_j-m_{j-1})\sum^{2^N-1}_{i=j}t_i-\sum^{2^{N}-1}_{i=1}n_jt_j\nonumber\\
		& = \sum_{i=1}^{2^\kappa-2}(m_i-n_i)t_i+\sum_{i=2^\kappa-1}^{2^N-1}(m_{2^\kappa-1}-n_i)t_i\nonumber\\
		&:= \delta^G_{X|X^c}.\nonumber
	\end{align}
	This is the condition (\textcolor{red}{2b}) of Theorem \textcolor{red}{1}; and this completes the proof.
\end{proof}
\par
Note that, the criterion (\textcolor{red}{2a}) and the criterion (\textcolor{red}{2b}) help us to detect entanglement at $X$-vs-$X^c$ cut in the following way. For a given state $\rho$, if the values of thermodynamic quantity $\Delta_{X|X^c}$ of Eq.(\textcolor{red}{1}) exceeds the separability bound $\delta^{GL}_{X|X^c}$ then the state is entangled, and the thermodynamic criteria turns out to be equivalent to some conditions of Nielsen-Kempe criteria:
\begin{align}
	\sum^{2^{\kappa}-1}_{i=1}m_jx_j - \sum^{2^{N}-1}_{i=1}n_jt_j &>  \sum_{i=1}^{2^\kappa-1}(m_i-m_1)x_i+\sum_{i=1}^{2^N-1}(m_1-n_i)t_i \nonumber\\
	\Rightarrow\sum^{2^\kappa-1}_{i=1}x_i & > \sum^{2^N-1}_{i=1}t_i~~\Rightarrow~~x_0<t_0.
\end{align} 

\begin{remark}
	For the $N$-qubit noisy GHZ state of Eq. (\ref{noisyGHZ}) entanglement across $X$-vs-$X^c$ bipartition can be certified by comparing the values of $\Delta_{X|X^c}$ and $\delta^{GL}_{X|X^c}$. As it turns out according to this test the state is entangled for $\lambda>\lambda^{GL}_\kappa := \frac{2^{N-\kappa}-1}{2^{N-1}+2^{N-\kappa}-1}$, where the $X$ subsystem contains $\kappa$ qubits.  Important to note that the bound  $\lambda^{GL}_{\kappa}$ depends on the value of $\kappa$. As more subsystems are considered in the $X$ part, the criteria will encompass a broader range of entanglement, as illustrated in Fig. \ref{figs2}. In the present example, our criteria will encompass the entire range of entanglement, just like the Nielsen-Kempe criteria, when we choose $\kappa=N-1$.      
\end{remark}
\begin{figure}[t!]
	\centering
	\includegraphics[trim={0cm 0cm 0cm 0cm},clip=,width=8cm]{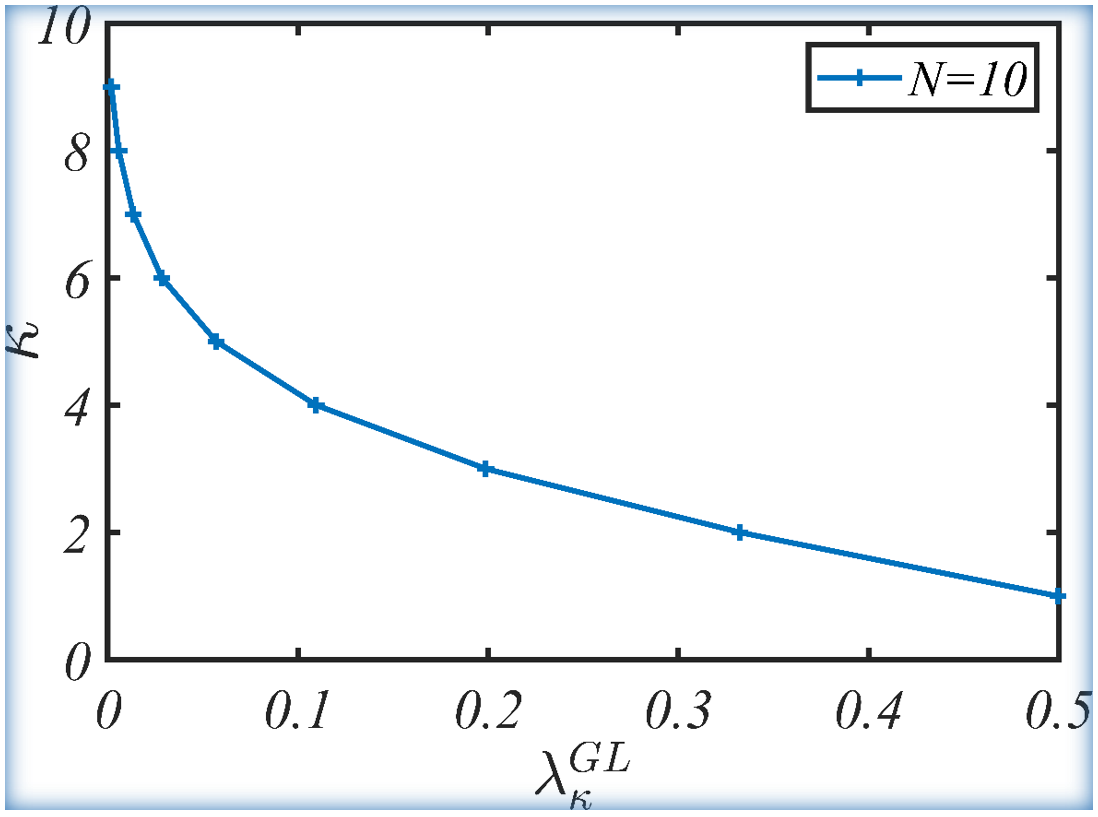}
	\caption{A plot of global and local spectral dependent entanglement threshold ($\lambda^{GL}_\kappa$) across $X$-vs-$X^c$ bi-partition  versus the number of subsystems ($\kappa$) in the $X$ part.}\label{figs2}
\end{figure}

Important to note that likewise  Nielsen-Kempe criteria criterion, (\textcolor{red}{2a}) of Theorem \textcolor{red}{1} is independent of the Hamiltonian of the given system. However, as evident from Eq.(\textcolor{red}{5}) and the separability bound (\textcolor{red}{2b}), entanglement detection condition explicitly depends on the Hamiltonian (except for $\delta^G_{1|1^c}$, since it becomes equal to $\delta^{GL}_{1|1^c}$) as shown below,
\begin{align}
	\sum^{2^{\kappa}-1}_{i=1}m_ix_i - \sum^{2^{N}-1}_{i=1}n_it_i &> \sum_{i=1}^{2^\kappa-2}(m_i-n_i)t_i+\sum_{i=2^\kappa-1}^{2^N-1}(m_{2^\kappa-1}-n_i)t_i \nonumber\\
	\Rightarrow\sum^{2^\kappa-1}_{i=1}m_ix_i &> \sum^{2^{\kappa}-2}_{i=1}m_it_i+m_{2^\kappa-1}\sum^{2^N-1}_{i=2^\kappa-1}t_i.
\end{align}

\subsection{State-independent thermodynamics criteria of entanglement}

Thermodynamic criteria provided in the Theorem \textcolor{red}{1} are spectral dependent and therefore entanglement certification through this criteria demands knowledge about the state in question. Next we will show that one can in-fact obtain spectral independent entanglement criterion, albeit weaker than spectral dependent criterion. But, the advantage is that one can invoke this state independent criterion to test entanglement of an unknown state. Such criteria generally depends on the Hamiltonian as well as the number of parties involved in the $X$-vs-$X^c$ partition. Since it is difficult to analyze the most generic case in one go, in the following we analyze different systems one after another.
\subsubsection{Two-qubit system}
Consider a two-qubit system governed by the Hamiltonian 
\begin{align}
	H=H_1\otimes\mathbf{I}+\mathbf{I}\otimes H_2,  
\end{align}
\begin{proposition}
	Any separable state of a two-qubit system governed by the Hamiltonian $H=H_1\otimes\mathbf{I}+\mathbf{I}\otimes H_2$ satisfies the condition 
	\begin{align*}
		\Delta_{1|2}\leq  \max\{(\alpha_1-\alpha_2)/2,0\}:=\delta^I_{1|2}.
	\end{align*}
	Here, $H_l:=\sum_{i=0}^1(E_l+i\alpha_l)\proj{i}$, for $l=1,2$.
\end{proposition}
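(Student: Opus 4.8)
The plan is to derive $\delta^I_{1|2}$ by maximizing the spectral-dependent separability bound of Theorem~\ref{theo1} over all admissible spectra, thereby removing its dependence on the state. First I would specialize to $\kappa=1$, $N=2$. For the single-qubit marginal $X$ one has $m_0=0$ and $m_1=\alpha_1$, while for $H=H_1\otimes\mathbf{I}+\mathbf{I}\otimes H_2$ the four energy levels, measured from the ground state $E_g=E_1+E_2$, are $\{0,\alpha_1,\alpha_2,\alpha_1+\alpha_2\}$; sorting these increasingly fixes $n_1,n_2,n_3$. Since $2^\kappa-1=1$, the marginal-dependent sum in the bound (\ref{ecs}) collapses to the single term $(m_1-m_1)x_1=0$, so that
\begin{align}
\delta^{GL}_{1|2}=\sum_{i=1}^{3}(\alpha_1-n_i)\,t_i,
\end{align}
which (as the remark above notes) also equals $\delta^G_{1|2}$ for $\kappa=1$.

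Because Theorem~\ref{theo1} guarantees $\Delta_{1|2}\le\delta^{GL}_{1|2}(\vec t)$ for every separable state with global spectrum $\vec t$, a bound valid without any spectral knowledge follows by taking the supremum of the right-hand side over all ordered probability vectors, i.e.\ $t_0\ge t_1\ge t_2\ge t_3\ge 0$ with $\sum_i t_i=1$. This is a linear functional on the ordered probability simplex, whose maximum is attained at one of its four extreme points $(1,0,0,0)$, $(\tfrac12,\tfrac12,0,0)$, $(\tfrac13,\tfrac13,\tfrac13,0)$ and $(\tfrac14,\tfrac14,\tfrac14,\tfrac14)$, so it suffices to evaluate $\delta^{GL}_{1|2}$ at these vertices. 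I would then split into two cases according to the sign of $\alpha_1-\alpha_2$, since this sign determines the ordering of the intermediate levels $n_1$ and $n_2$.

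If $\alpha_1\le\alpha_2$ then $n_1=\alpha_1$ and all coefficients $\alpha_1-n_i$ are non-positive, so the maximum is $0$, reached at $(1,0,0,0)$. If $\alpha_1>\alpha_2$ then $n_1=\alpha_2$, $n_2=\alpha_1$, $n_3=\alpha_1+\alpha_2$, giving $\delta^{GL}_{1|2}=(\alpha_1-\alpha_2)t_1-\alpha_2 t_3$; evaluating at the four vertices and using $\alpha_1>\alpha_2>0$ singles out $(\tfrac12,\tfrac12,0,0)$ as optimal with value $(\alpha_1-\alpha_2)/2$. Combining the cases yields $\delta^I_{1|2}=\max\{(\alpha_1-\alpha_2)/2,0\}$. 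The step needing the most care is the case distinction: the ordering of $n_1,n_2$ is not fixed a priori, and one must verify in each regime that the chosen vertex indeed dominates the other three. By contrast, replacing the ``separable spectra'' by all ordered probability vectors is harmless for the inequality direction, since it only enlarges the feasible set and can therefore only weaken the resulting upper bound.
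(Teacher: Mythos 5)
Your proposal is correct and follows essentially the same route as the paper: specialize the spectral-dependent bound of Theorem~1 to $\kappa=1$, $N=2$ (where the marginal sum vanishes and $\delta^{GL}_{1|2}=\delta^{G}_{1|2}$), split on the ordering of $\alpha_1$ and $\alpha_2$, and maximize the resulting linear functional over ordered probability vectors. The only difference is cosmetic: you justify the maximization by enumerating the four extreme points of the ordered simplex, which is a slightly more explicit version of the paper's direct identification of the maximizing $\vec t$ in each case.
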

\begin{proof}
	In this case criterion (\textcolor{red}{2b}) of Theorem \textcolor{red}{1} boils down to 
	\begin{equation}
		\Delta_{1|2}\leq \sum^3_{i=1}(m_1-n_i)t_i.\label{bi}
	\end{equation}
	Depending upon the values of $\alpha_1$ and $\alpha_2$ several cases are possible which we analyze below. 
	\begin{itemize}
		\item[(C-I)] $\alpha_1>\alpha_2>0$: In this case we have, $m_1=\alpha_1$, $n_1=\alpha_2, n_2=\alpha_1, n_3=\alpha_1+\alpha_2$, and accordingly condition (\ref{bi}) becomes 
		\begin{equation}
			\Delta_{1|2}\leq (\alpha_1-\alpha_2)t_1-\alpha_2t_3.
		\end{equation}
		As the global spectral $\vec{t}\equiv\{t_i\}_{i=0}^3$ are arranged in decreasing order, in the above inequality maximization occurs at $\Vec{t}\equiv \{\frac{1}{2},\frac{1}{2},0,0\}$; and therefore we have the spectral independent criterion
		\begin{equation}
			\Delta_{1|2}\leq \frac{\alpha_1-\alpha_2}{2} = \delta^I_{1|2}. \label{case1}
		\end{equation}
		\item[(C-II)] $\alpha_1=\alpha_2=\alpha>0$: Here we have $m_1=\alpha, n_1=\alpha, n_2=\alpha, n_3=2\alpha$, and accordingly condition (\ref{bi}) becomes 
		\begin{align}
			\Delta_{1|2}\leq -\alpha t_3.   
		\end{align}
		In this case maximization occurs at $\Vec{t}\equiv \{t_0,t_1,t_2,0\}$ and thus we have 
		\begin{equation}
			\Delta_{1|2}\leq 0 = \delta^I_{1|2} .
			\label{case2}
		\end{equation}
		\item[(C-III)] $0<\alpha_1<\alpha_2$: $m_1=\alpha_1, n_1=\alpha_1, n_2=\alpha_2, n_3=\alpha_1+\alpha_2$; and condition (\ref{bi}) reads as
		\begin{align}
			\Delta_{1|2}\leq -(\alpha_2-\alpha_1)t_2-\alpha_2 t_3.   
		\end{align}
		As maximization occurs at $\Vec{t}\equiv \{t_0,t_1,0,0\}$, we have
		\begin{equation}
			\Delta_{1|2}\leq 0 = \delta^I_{1|2} .\label{case3}
		\end{equation}
	\end{itemize}
	Combining (\ref{case1}), (\ref{case2}), and (\ref{case3}), we therefore have 
	\begin{equation}
		\Delta_{1|2}\leq \delta^I_{1|2} =  \max\left\{\frac{\alpha_1-\alpha_2}{2},0\right\}.
	\end{equation}
	This completes the proof. 
\end{proof}

\subsubsection{Three-qubit system}
Consider a $3$-qubit system with Hamiltonian
\begin{align}
	H=H_1\otimes\mathbf{I}\otimes\mathbf{I}+\mathbf{I}\otimes H_2\otimes\mathbf{I}+\mathbf{I}\otimes\mathbf{I}\otimes H_3, \label{H3}  
\end{align}
with $H_i=\sum^1_{j=0}(E_i+j\alpha_i)\proj{j}$ for $i\in\{1,2,3\}$, and $\alpha_i>0~\forall~i$. Depending upon the values of $\{\alpha_i\}_{i=1}^3$ and depending on the bipartitions considered several cases are possible. In the following, we will analyze the cases relevant to our experiment.   
\begin{proposition}
	Consider a $3$-qubit system governed by the Hamiltonian (\ref{H3}). Any state of this system separable across $1$-vs-$23$ bipartition satisfies the condition    
	\begin{subequations}
		\begin{align}
			\Delta_{1|23} &\leq 0 := \delta^I_{1|23}, ~~~~~~~~~~~~ \mbox{when} ~~\alpha_1<\alpha_2=\alpha_3:=\alpha \\
			\Delta_{1|23} &\leq \frac{\alpha-\alpha_3}{2} := \delta^I_{1|23}, ~~~~\mbox{when} ~~~ \alpha_1=\alpha_2=\alpha > \alpha_3.
		\end{align}
	\end{subequations}
\end{proposition}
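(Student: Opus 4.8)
The plan is to derive both bounds directly from the global separability criterion (2b) of Theorem 1, following exactly the template established in the two-qubit Proposition 1. For a $3$-qubit system the $X$ part contains $\kappa=1$ qubit, so $2^\kappa-1=1$, and criterion (2b) collapses to the single-sum form
\begin{align*}
\Delta_{1|23}\leq \sum_{i=1}^{7}(m_1-n_i)t_i,
\end{align*}
since the first sum $\sum_{i=1}^{2^\kappa-2}$ is empty and $m_{2^\kappa-1}=m_1=\alpha_1$. First I would fix the seven excited-state energy gaps $\{n_i\}_{i=1}^{7}$ of the full Hamiltonian (\ref{H3}) by enumerating the nonzero eigenvalues $\{\alpha_1,\alpha_2,\alpha_3,\alpha_1+\alpha_2,\alpha_1+\alpha_3,\alpha_2+\alpha_3,\alpha_1+\alpha_2+\alpha_3\}$ and sorting them in increasing order $n_1\le\cdots\le n_7$. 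The ordering depends on the relative sizes of the $\alpha_i$, which is precisely why the two stated cases must be treated separately.

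Next I would substitute the sorted values into the coefficients $(m_1-n_i)=(\alpha_1-n_i)$ and maximize the right-hand side over all admissible global spectra $\vec{t}\equiv\{t_i\}_{i=0}^{7}$ subject to $t_0\ge t_1\ge\cdots\ge t_7\ge 0$ and $\sum_i t_i=1$. The key observation, inherited from the two-qubit argument, is that the bound is a weighted sum of the $t_i$ for $i\ge 1$ with $t_0$ absent; because the weights $(\alpha_1-n_i)$ are non-increasing in $i$, the optimum concentrates probability on $t_0$ and on those few $t_i$ whose coefficient is non-negative, with all remaining $t_i$ set to zero. In case (a), $\alpha_1<\alpha_2=\alpha_3=\alpha$, every gap $n_i$ with $i\ge 1$ satisfies $n_i\ge\alpha_1$ (indeed $n_1=\alpha_1$ gives coefficient zero and all others are strictly negative), so the maximum is $0$, yielding $\delta^I_{1|23}=0$. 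In case (b), $\alpha_1=\alpha_2=\alpha>\alpha_3$, I would identify which single coefficient $(\alpha_1-n_i)$ is positive: with $\alpha_1=\alpha>\alpha_3>0$ the smallest gap is $n_1=\alpha_3$, giving coefficient $\alpha-\alpha_3>0$, while the next gap $n_2=\alpha_1=\alpha$ gives coefficient zero and the rest are negative. The maximization then puts weight $\tfrac12$ on $t_0$ and $\tfrac12$ on $t_1$ (i.e. $\vec t\equiv\{\tfrac12,\tfrac12,0,\dots,0\}$), producing $\tfrac12(\alpha-\alpha_3)$, which is the claimed $\delta^I_{1|23}$.

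The main obstacle is the case analysis in sorting the seven gaps: I must verify that under each hypothesis the induced ordering of the $n_i$ genuinely makes the coefficients $(\alpha_1-n_i)$ non-increasing and pins down exactly which ones are non-negative, so that the linear-programming maximum over the decreasingly-ordered simplex is attained at the claimed vertex. One subtlety is ties among the gaps (e.g. $\alpha_2=\alpha_3$ forces $n_2=n_3$, and $\alpha_1=\alpha_2$ creates further coincidences); these do not affect the value of the maximum but require care so that the ``put all non-negatively-weighted mass as high as the ordering allows'' argument remains valid. Once the correct vertex is identified in each case, the evaluation is a short arithmetic check, exactly mirroring the three-case computation (C-I)--(C-III) of Proposition 1.
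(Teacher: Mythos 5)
Your proposal is correct and follows essentially the same route as the paper's proof: reduce criterion (2b) to $\Delta_{1|23}\le\sum_{i=1}^{7}(m_1-n_i)t_i$ for $\kappa=1$, sort the seven energy gaps case by case, and maximize the resulting linear functional over the decreasingly ordered simplex, landing on $\vec t=(1,0,\dots,0)$ in case (a) and $\vec t=(1/2,1/2,0,\dots,0)$ in case (b). The gap orderings and optimal vertices you identify match those in the paper exactly.
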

\begin{proof}
	In this case, across $1$-vs-$23$ cut the criterion (\textcolor{red}{2b}) of Theorem \textcolor{red}{1} reads as \begin{equation}
		\Delta_{1|23} \leq \sum^7_{i=1}(m_1-n_i)t_i~.\label{3qu1}
	\end{equation}
	\begin{itemize}
		\item[(C-I)] When $\alpha_1<\alpha_2=\alpha_3=\alpha$, we have $m_1=\alpha_1, n_1=\alpha_1,n_2=n_3=\alpha,n_4=n_5=\alpha+\alpha_1,n_6=2\alpha,n_7=2\alpha+\alpha_1$, and accordingly above becomes
		\begin{equation}
			\Delta_{1|23}\leq -[(\alpha-\alpha_1)(t_2+t_3)+\alpha(t_4+t_5)+(2\alpha-\alpha_1)t_6+2\alpha t_7]. \label{delta123}
		\end{equation}
		Since the function $f:=(\alpha-\alpha_1)(t_2+t_3)+\alpha(t_4+t_5)+(2\alpha-\alpha_1)t_6+2\alpha t_7$ is a linear in $\Vec{t}\equiv \{t_i\}^7_{i=0}$, and since spectral are arranged in decreasing order, it is evident that $f$ will take minimum value $0$ at $\vec{t}\equiv (1,0,0,0,0,0,0,0)$. This proves the claim 
		\begin{align}
			\Delta_{1|23} \leq 0 = \delta^I_{1|23}.
			\label{SIB3}
		\end{align}
		\item[(C-II)] Similarly, for $\alpha_1=\alpha_2=\alpha>\alpha_3$ we have $m_1=\alpha, n_1=\alpha_3, n_2 = n_3 = \alpha, n_4 = n_5 =\alpha+\alpha_3, n_6=2\alpha, n_7=2\alpha+\alpha_3$, and accordingly
		condition (\ref{3qu1}) becomes
		\begin{equation}
			\Delta_{1|23} \leq (\alpha-\alpha_3)t_1-\alpha_3(t_4+t_5)-\alpha t_6 -(\alpha+\alpha_3)t_7.
		\end{equation}
		In this case the function $g=(\alpha-\alpha_3)t_1-\alpha_3(t_4+t_5)-\alpha t_6 -(\alpha+\alpha_3)t_7$ takes maximum value at $\Vec{t}\equiv (1/2,1/2,0,0,0,0,0,0)$, which further imply
		\begin{align*}
			\Delta_{1|23}\leq \frac{\alpha-\alpha_3}{2}=\delta^I_{1|23}.   
		\end{align*}
	\end{itemize}
	This completes the proof.
\end{proof}
\begin{proposition}
	Consider a $3$-qubit system governed by the Hamiltonian (\ref{H3}). Any state of this system separable across $12$-vs-$3$ bipartition satisfies the condition    
	\begin{subequations}
		\begin{align}
			\Delta_{12|3} &\leq \frac{\alpha_1}{4} := \delta^I_{12|3}, ~~~~~~~~~~~~~~~~\mbox{when} ~~0<\alpha_1<\alpha_2=\alpha_3:=\alpha; \\
			\Delta_{12|3} &\leq \frac{\alpha-\alpha_3}{4}+\frac{\alpha}{4} := \delta^I_{12|3}, ~~~ \mbox{when} ~~~ \alpha_1=\alpha_2=\alpha > \alpha_3>0~~ and~~ \alpha_3 \geq \frac{2}{3} \alpha;\label{threeI}\\
			\Delta_{12|3} &\leq \frac{\alpha-\alpha_3}{2}+\frac{\alpha}{6} := \delta^I_{12|3}, ~~~ \mbox{when} ~~~ \alpha_1=\alpha_2=\alpha > \alpha_3>0~~ and~~ \alpha_3 \leq \frac{2}{3} \alpha.\label{threeII}
		\end{align}
	\end{subequations}
\end{proposition}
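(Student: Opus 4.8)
The plan is to start from the global separability bound, criterion (\textcolor{red}{2b}) of Theorem \textcolor{red}{1}, specialised to $X=\{1,2\}$ (so $\kappa=2$) and $X^c=\{3\}$ (so $N=3$). With $m_{2^\kappa-1}=m_3=\alpha_1+\alpha_2$, it reads
\begin{align}
\Delta_{12|3}\le \sum_{i=1}^{2}(m_i-n_i)t_i+\sum_{i=3}^{7}(m_3-n_i)t_i=:\sum_{i=1}^{7}c_it_i,
\end{align}
where $\{m_j\}_{j=0}^3$ is the increasingly ordered list of the four eigen-gaps of $H_X=H_1\otimes\mathbf{I}+\mathbf{I}\otimes H_2$, and $\{n_j\}_{j=0}^7$ the increasingly ordered list of the eight gaps of $H$. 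The state-independent bound $\delta^I_{12|3}$ is then obtained by maximising this linear functional of $\vec t$ over all admissible spectra, i.e.\ over the polytope $P=\{t_0\ge t_1\ge\cdots\ge t_7\ge0,\ \sum_it_i=1\}$. Since every separable state obeys the inequality for its own $\vec t$, the worst case over $P$ is a bound valid with no knowledge of the spectrum.

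The key structural fact I would invoke is that $P$ is the convex hull of the eight flat distributions $u^{(k)}=\tfrac1k(1,\dots,1,0,\dots,0)$ with exactly $k$ nonzero entries, $k=1,\dots,8$: setting $\lambda_k=k(t_{k-1}-t_k)$ (with $t_8:=0$) exhibits any $\vec t\in P$ as the convex combination $\vec t=\sum_k\lambda_k u^{(k)}$, and each $u^{(k)}$ is an extreme point. Because a linear functional attains its maximum over a polytope at an extreme point, the optimisation collapses to evaluating the objective at the $u^{(k)}$, where it equals $\tfrac1k\sum_{i=1}^{k-1}c_i$ (the $i=0$ term drops out since the objective has no $t_0$ coefficient). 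Hence $\delta^I_{12|3}=\max_{1\le k\le8}\tfrac1k\sum_{i=1}^{k-1}c_i$, a finite comparison of eight numbers.

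The remaining work is bookkeeping carried out separately in each regime. First I would sort the gaps to read off the $m_i,n_i$ and hence the $c_i$. For $\alpha_1<\alpha_2=\alpha_3=\alpha$ one finds the normalised partial sums are maximised at $k=4$, giving $\alpha_1/4$. For $\alpha_1=\alpha_2=\alpha>\alpha_3$ the sorted spectra yield $c_1=c_4=c_5=\alpha-\alpha_3$, $c_3=\alpha$, $c_2=c_6=0$, $c_7=-\alpha_3$; comparing the three competing candidate values $(2\alpha-\alpha_3)/4$ at $k=4$, $(3\alpha-2\alpha_3)/5$ at $k=5$, and $(4\alpha-3\alpha_3)/6$ at $k=6$ shows that $k=4$ wins precisely when $\alpha_3\ge\tfrac23\alpha$ and $k=6$ wins when $\alpha_3\le\tfrac23\alpha$, producing the stated bounds $\tfrac{\alpha-\alpha_3}{4}+\tfrac\alpha4$ and $\tfrac{\alpha-\alpha_3}{2}+\tfrac\alpha6$ respectively.

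I expect the only genuine obstacle to be correctly ordering the eight eigen-gaps of $H$ in each regime, since the coefficients $c_i$ and thus the winning vertex depend sensitively on this order, together with checking that no intermediate $k$ beats the claimed maximiser. The threshold $\alpha_3=\tfrac23\alpha$ emerges exactly as the crossover between the $k=4$ and $k=6$ candidates, which is why the two sub-cases \eqref{threeI} and \eqref{threeII} appear. The reduction to the flat distributions is what turns an infinite-dimensional optimisation over spectra into this routine finite check.
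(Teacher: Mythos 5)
Your proposal is correct and follows essentially the same route as the paper: specialise criterion (2b) to $\kappa=2$, $N=3$, sort the eigen-gaps in each regime to get the coefficients $c_i$ (your values match the paper's $m_i,n_i$ exactly), and maximise the resulting linear functional over ordered spectra, with the crossover at $\alpha_3=\tfrac{2}{3}\alpha$ arising from comparing the $k=4$ and $k=6$ flat distributions. The only difference is that you justify the maximisation step explicitly via the extreme-point decomposition $\vec t=\sum_k\lambda_k u^{(k)}$, whereas the paper simply asserts the optimising $\vec t$; your version is the more rigorous one.
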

\begin{proof}
	Criterion (2b) of Theorem 1 boils down to 
	\begin{equation}
		\Delta_{12|3} \leq \sum^3_{i=1}(m_i-n_i)t_i+\sum^7_{i=4}(m_3-n_i)t_i.\label{3quibt2}
	\end{equation}
	\begin{itemize}
		\item [(C-I)] For $0<\alpha_1<\alpha_2=\alpha_3=\alpha$, we have $m_1=\alpha_1,m_2=\alpha,m_3=\alpha+\alpha_1,n_1=\alpha_1,n_2=n_3=\alpha,n_4=n_5=\alpha+\alpha_1,n_6=2\alpha,n_7=2\alpha+\alpha_1,$ and accordingly condition (\ref{3quibt2}) becomes,
		\begin{equation}
			\Delta_{12|3}\leq \alpha_1t_3.
		\end{equation}
		As maximization occurs at $\vec{t}\equiv\{1/4,1/4,1/4,1/4,0,0,0,0\}$, we have 
		\begin{equation}
			\Delta_{12|3}\leq \frac{\alpha_1}{4}=\delta^I_{12|3}.
		\end{equation}
		\item [(C-II)] For $0<\alpha_1<\alpha_2=\alpha_3=\alpha$, we have $m_1=\alpha,m_2=\alpha,m_3=2\alpha,n_1=\alpha_3,n_2=n_3=\alpha,n_4=n_5=\alpha+\alpha_3,n_6=2\alpha,n_7=2\alpha+\alpha_3,$ and accordingly condition (\ref{3quibt2}) becomes,
		\begin{equation}
			\Delta_{12|3}\leq (\alpha-\alpha_3)(t_1+t_4+t_5)+\alpha t_3.
		\end{equation}
		If $\alpha_3 \geq \frac{2}{3}\alpha$, maximization occurs at $\vec{t}\equiv\{1/4,1/4,1/4,1/4,0,0,0,0\}$ and we have 
		\begin{equation}
			\Delta_{12|3}\leq \frac{\alpha-\alpha_3}{4}+\frac{\alpha}{4} =\delta^I_{12|3}.
		\end{equation}
		If $\alpha_3 \leq \frac{2}{3}\alpha$, maximization occurs at $\vec{t}\equiv\{1/6,1/6,1/6,1/6,1/6,1/6,0,0\}$ and we have 
		\begin{equation}
			\Delta_{12|3}\leq \frac{\alpha-\alpha_3}{2}+\frac{\alpha}{6} =\delta^I_{12|3}.
		\end{equation}
	\end{itemize}
	This completes proof of the claim. 
\end{proof}

\begin{remark}
	It is instructive to see an explicit example how the entanglement certification criteria get weakened with lesser amount of information about the state in question. For that, consider a three-qubit noisy GHZ state described in Eq. (\ref{noisyGHZ}), where $N=3$, and the Hamiltonian has the following specifications: $\alpha_1=\alpha_2=\alpha > \alpha_3=2\alpha/3$. Let's set $\kappa=2$ and focus on the thermodynamic quantity $\Delta_{12|3}$(Eq. \textcolor{red}{1}). By comparing the values of $\Delta_{12|3}$ with $\delta^{GL}_{12|3}$(\textcolor{red}{2a}), $\delta^G_{12|3}$,(\textcolor{red}{2b}) and $\delta^I_{12|3}$(\ref{threeI} \& \ref{threeII}), we obtain the entanglement threshold values: $\lambda^{GL}_{12|3}=1/5 < \lambda^{G}_{12|3}=1/2 < \lambda^{I}_{12|3}=4/5$, respectively. Beyond these threshold values, the state is entangled, and it is evident that the range of entanglement expands as our separability bound $\delta_{12|3}$ incorporates more information about the state.    
\end{remark}

\subsubsection{Ten-qubit system} 
Consider a $10$-qubit system with Hamiltonian 
\begin{align}
	H=\sum_{l=1}^{10}\tilde{H}_{l};~\mbox{where}~ \tilde{H}_{l}:=\mathbf{I}_1\otimes\cdots\otimes\mathbf{I}_{l-1}\otimes H_l\otimes\mathbf{I}_{l+1}\otimes\cdots\otimes\mathbf{I}_{10},   
\end{align}
with  $H_l=\sum^{1}_{i=0}(E_l+i\alpha_l)|i\rangle\langle i|$. Our experiment considers a central qubit whose energy gap between excited and ground state is denoted by $\alpha_c$ and the rest qubits are identical where energy gap takes the value $\alpha>\alpha_c$.
\begin{proposition}
	Any state of this system separable across $1$-vs-$1^c$ bipartition satisfies the condition
	\begin{subequations}
		\begin{align}
			\Delta_{1|1^c} &\leq 0 := \delta^I_{1|1^c}, ~~~~~~~~~~~~ \mbox{when} ~~\alpha_1=\alpha_c<\alpha_j=\alpha ~~~ \forall~ j \neq 1  \\
			\Delta_{1|1^c} &\leq \frac{\alpha-\alpha_c}{2} := \delta^I_{1|1^c}, ~~~~\mbox{when} ~~~ \alpha_1=\alpha.
		\end{align}
	\end{subequations}  
\end{proposition}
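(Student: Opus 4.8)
The plan is to specialize criterion~(2b) of Theorem~1 to the bipartition at hand, namely $\kappa=1$ and $N=10$. For $\kappa=1$ one has $2^{\kappa}-1=1$, so the first sum in~(2b) runs over the empty range $i=1,\dots,0$ and drops out, leaving
\begin{align}
\Delta_{1|1^c}\le\sum_{i=1}^{2^{10}-1}(m_1-n_i)\,t_i,
\end{align}
where $m_1=\alpha_1$ is the excitation gap of the singled-out qubit and $\{n_i\}_{i=0}^{1023}$ is the excitation spectrum of the full $10$-qubit Hamiltonian, listed in increasing order with $n_0=0$. Thus the whole task reduces to maximizing the linear functional $f(\vec t)=\sum_{i\ge1}(m_1-n_i)t_i$ over all admissible global spectra $\vec t$, i.e.\ over the probability simplex subject to the monotonicity constraint $t_0\ge t_1\ge\cdots\ge t_{1023}\ge0$; the resulting maximum is precisely the claimed $\delta^I_{1|1^c}$.

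Next I would read off the bottom of the spectrum, which is all that matters since the $t_i$ are non-increasing while the coefficients $m_1-n_i$ become ever more negative as $i$ grows. The single-qubit excitations are $\alpha_c$ (central qubit) and $\alpha$ (each of the nine gap-$\alpha$ qubits), and every multi-qubit excitation is strictly larger. In case~(a), $\alpha_1=\alpha_c$ is the smallest gap, so $n_1=\alpha_c$ gives coefficient $m_1-n_1=0$, while $n_i\ge\alpha>\alpha_c$ for $i\ge2$ gives strictly negative coefficients; hence $f$ is maximized by $t_0=1$ and $f_{\max}=0$. In case~(b), $\alpha_1=\alpha$, while the smallest excitation of the whole system is still that of the central qubit, $n_1=\alpha_c$, followed by the nine levels $n_2=\cdots=n_{10}=\alpha$; thus $m_1-n_1=\alpha-\alpha_c>0$, the coefficients $m_1-n_i=0$ for $2\le i\le10$, and $m_1-n_i<0$ for $i\ge11$.

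Finally I would solve the resulting linear program. In case~(b) one sets $t_i=0$ for $i\ge11$ (negative coefficients) and maximizes $(\alpha-\alpha_c)\,t_1$; the monotonicity constraint $t_0\ge t_1$ together with normalization gives $2t_1\le t_0+t_1\le1$, hence $t_1\le\tfrac12$, attained at $t_0=t_1=\tfrac12$ with all other entries zero, so $f_{\max}=(\alpha-\alpha_c)/2=\delta^I_{1|1^c}$. I expect the main obstacle to be bookkeeping rather than conceptual: correctly ordering the low-lying part of the $2^{10}$-dimensional spectrum with its degeneracies, and, crucially, handling the monotonicity constraint in the optimization, since it is exactly the inequality $t_0\ge t_1$ that pins the optimal weight on $t_1$ at $\tfrac12$ and thereby produces the factor $1/2$ in case~(b). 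This argument runs in parallel with the two- and three-qubit propositions proved above.
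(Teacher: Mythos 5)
Your proposal is correct and follows essentially the same route as the paper: both specialize criterion (2b) to $\kappa=1$ so that $\Delta_{1|1^c}\le\sum_{i\ge 1}(m_1-n_i)t_i$, identify $n_1=\alpha_c$ and $n_2=\cdots=n_{10}=\alpha$ with all higher levels exceeding $\alpha$, and maximize the resulting linear functional over ordered spectra, obtaining $0$ in the first case and $(\alpha-\alpha_c)t_1\le(\alpha-\alpha_c)/2$ via $t_0\ge t_1$ in the second. Your explicit statement of the monotonicity-plus-normalization argument that pins $t_1\le 1/2$ is in fact cleaner than the paper's terse "maximization occurs at $\vec t\equiv\{1/2,1/2,\dots\}$".
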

\begin{proof}
	In this case criterion (2b) of Theorem 1 boils down to 
	\begin{equation}
		\Delta_{1|1^c}\leq \sum^{2^{10}-1}_{i=1}(m_1-n_i)t_i.\label{ten}.
	\end{equation}
	Depending on the central qubit's arrangement in bi-partition two cases are possible.
	\begin{itemize}
		\item [(C-I)] $\alpha>\alpha_1=\alpha_c>0: m_1=\alpha_c,n_1=\alpha_c,n_j\geq \alpha~~ \forall j \neq 1$, and accordingly condition (\ref{ten}) becomes,
		\begin{equation}
			\Delta_{1|1^c}\leq -\sum^{2^{10}-1}_{i=2}(n_i-\alpha_c)t_i.
			\label{delta10}
		\end{equation}
		Note that maximization occurs at $\vec{t}\equiv\{t_0,t_1,t_3,t_k=0\}~~\forall k\in\{4,\cdots,2^{10}-1\}$ and we have 
		\begin{equation}
			\Delta_{1|1^c}\leq 0 := \delta^I_{1|1^c}.
			\label{SIB10}
		\end{equation}
		\item [(C-II)] $0<\alpha_1=\alpha : m_1=\alpha,n_1=\alpha_c,n_j\geq \alpha ~~ \forall j \neq 1$ and accordingly condition (\ref{ten}) becomes,
		\begin{equation}
			\Delta_{1|1^c}\leq (\alpha-\alpha_c)t_1-\sum^{2^{10}-1}_{j=2}(n_j-\alpha_c) t_j.
		\end{equation}
		In this case maximization occurs at $\vec{t}\equiv\{1/2,1/2,t_k\} ~~\forall k \in \{3,\cdots,2^{10}-1\}$ and we have 
		\begin{equation}
			\Delta_{1|1^c}\leq \frac{\alpha-\alpha_c}{2} := \delta^I_{1|1^c}.
		\end{equation}
	\end{itemize}
	This completes the proof.
\end{proof}
\begin{figure}[t!]
	\centering
	\includegraphics[trim={0cm 0cm 0cm 0cm},clip=,width=8cm]{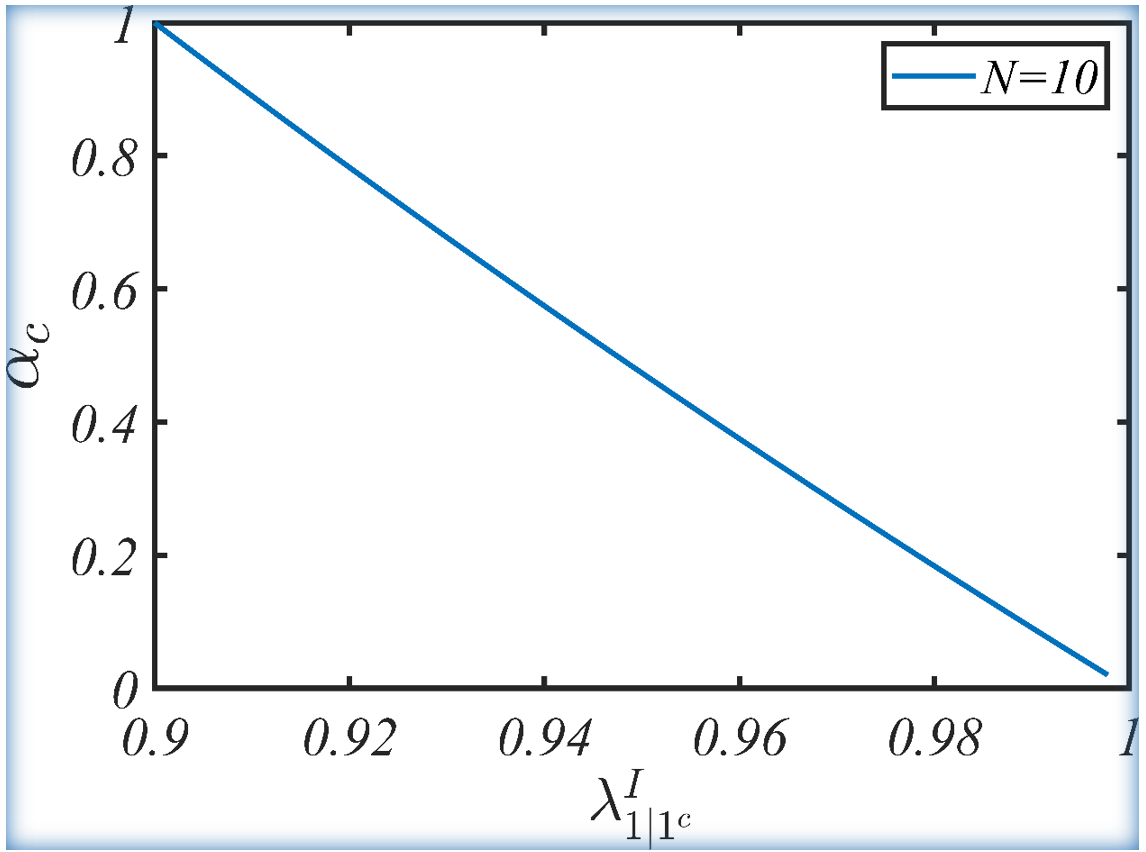}
	\caption{A plot of state independent entanglement threshold ($\lambda^{GL}_\kappa$) across $1$-vs-$1^c$ bi-partition versus the energy difference in central qubit ($\alpha_c$) for a ten-qubit system.}\label{figs3}
\end{figure}
\begin{remark}
	It turns out that the state-independent value of purity ensuring entanglement varies with the energy difference between the ground and excited states of the central qubit ($\alpha_c$). Let's consider an $N$-qubit noisy GHZ system (\ref{noisyGHZ}) where $N-1$ qubits are identical with energy parameter $\alpha$, while the central qubit has $\alpha_c<\alpha$. By comparing the values of $\Delta_{1|1^c}$ with the separability bound $\delta^I_{1|1^c}$, we can determine that the entanglement threshold value is given by $\lambda^{I}_{1|1^c}=\frac{(N-1)\alpha}{\alpha_c+(N-1)\alpha}$. If the purity parameter $\lambda$ exceeds this bound, we classify the state as entangled. Note that as $\alpha_c$ approaches to $\alpha$, state independent thresh-hold value of purity $\lambda^I_{1|1^c}$ decreases and consequently more region of entanglement can be detected (see Fig.\ref{figs3}).
\end{remark}

\section{Analysis of the experimental procedure and Results}\label{app-c}
\subsection{Experiment I-a: Two-qubit Bell-diagonal state}
In this experiment we addresses entanglement certification in two-qubit Bell diagonal state. For that, we use a two-qubit spin system of Sodium Fluoro-Phosphate (NaFP) dissolved in heavy water (D$_2$O) (see Fig. \textcolor{red}{2}(b) in manuscript). The Hamiltonian of the system in terms of the internal part and the RF drive reads as 
\begin{subequations}
	\begin{align}
		H_{12}&= H_{12}^{\mathrm{int}} + 
		H_{12}^{\mathrm{RF}},~\mbox{where}\\ 
		H_{12}^{\mathrm{int}}&=  -  \omega_F I_{1z} -  \omega_P I_{2z} + 2\pi J I_{1z}I_{2z} ~\mbox{and}\nonumber\\
		H_{12}^{\mathrm{RF}}&=  \Omega_F(t) I_{1x}+ \Omega_P(t) I_{2x},
	\end{align}
\end{subequations}
with $\{I_{ix}=\sigma_{ix}/2,I_{iy}=\sigma_{iy}/2,I_{iz}=\sigma_{iz}/2\}$ denoting the spin operators in NMR language \cite{cavanagh1996protein}.\\\\
{\bf Preparation step}\\
To prepare the Bell diagonal state with two independent parameter $\beta$ and $\gamma$ we follow the quantum circuit as shown in Fig.\ref{figs4}(a) \cite{riedel2021bell} where we feed pure state $\ket{11}_{12}$ at the input. To create this input state within NMR architecture, we start with thermal state, which under high-field and high-temperature approximation ($ \hbar \omega_H << k_BT$) reads as $\rho_{th} = \mathbf{I}/4 + \epsilon_P (\frac{\gamma_F}{\gamma_P}I_{1z} + I_{2z}$), where  $\epsilon_P =   \gamma_PB_0/ 4k_BT$. Following the method of spatial averaging \cite{cory1997ensemble} which results in PPS (Psuedo-Pure State) pulse sequence (see Fig. 2(a) in main manuscript) we have $\ket{11}\bra{11}^{pps} = (1-\epsilon)\mathrm{I}/4 + \epsilon\ket{11}\bra{11}$. While working in the regime of PPS we take $\epsilon = 1$ and accordingly we have $\ket{11}\bra{11}^{pps} \equiv \ket{11}\bra{11}$. Now, following the quantum circuit of Fig. \ref{figs4}(b), whose experimental implementation is shown in Fig. 2(a) of the main manuscript, we prepare
\begin{align}
	\rho_{12}&= \sum_{i,j=0}^{1}p_{ij}\ket{\mathcal{B}_{ij}}\bra{\mathcal{B}_{ij}},~\mbox{where}\\
	\ket{\mathcal{B}_{0j}}&:=\frac{\ket{00}+(-1)^j\ket{11}}{\sqrt{2}},~\ket{\mathcal{B}_{1j}}:=\frac{\ket{01}+(-1)^j\ket{10}}{\sqrt{2}},\nonumber\\
	p_{00}&:=\sin^2(\beta/2)\sin^2(\gamma/2),~~~~p_{01}:= \sin^2(\beta/2)\cos^2(\gamma/2),\nonumber\\
	p_{10}&:=\cos^2(\beta/2)\sin^2(\gamma/2),~~~~p_{11}:= \cos^2(\beta/2)\cos^2(\gamma/2).\nonumber
\end{align}
\begin{figure}[t!]
	\centering
	\includegraphics[trim={0cm 0cm 0cm 0cm},clip=,width=10cm]{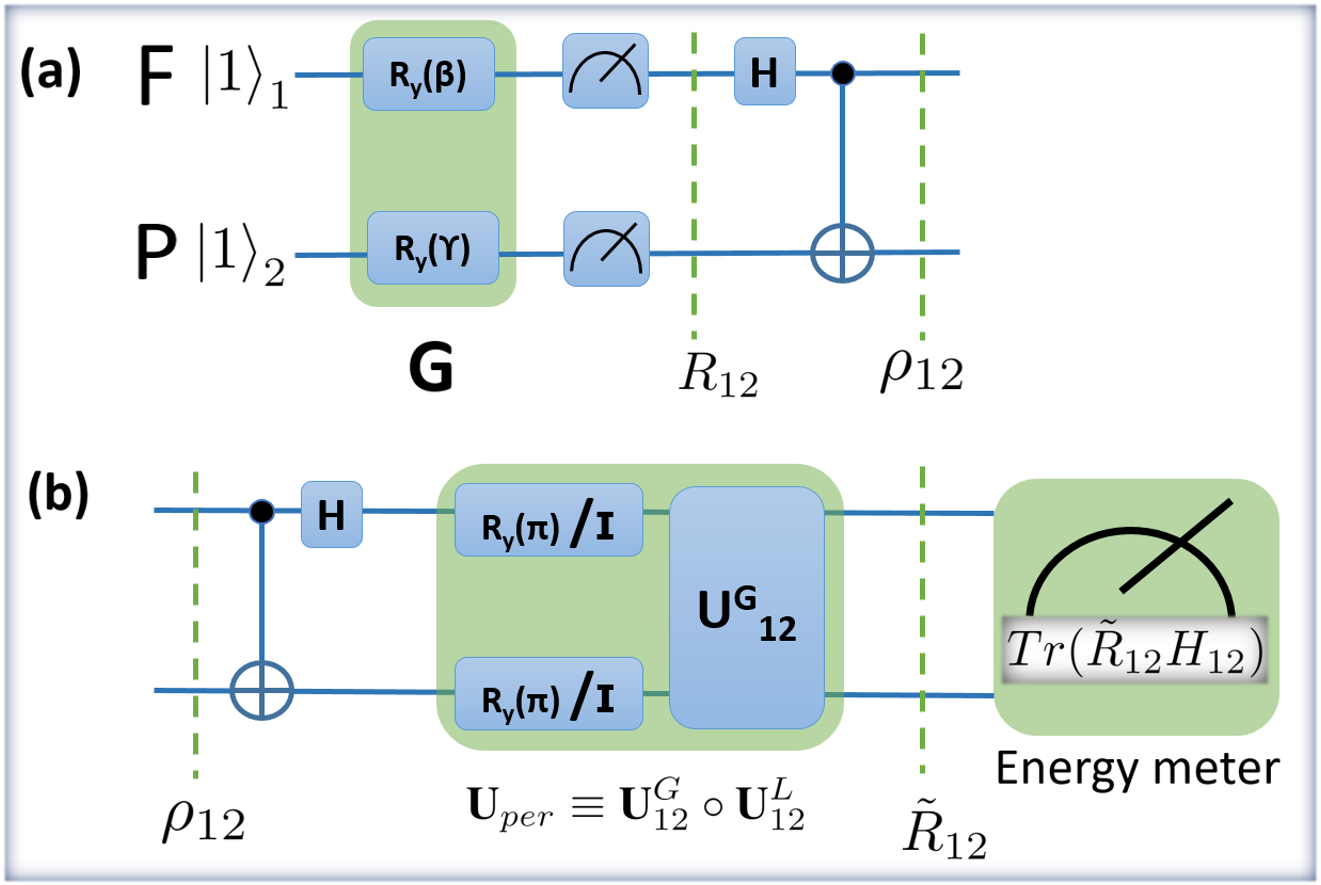}
	\caption{(a) Quantum circuit to prepare Bell-diagonal state $\rho_{12}$. Here $R_y(\alpha)$ is the rotation along y-axis with angle $\alpha$ and the unread measurement after G is equivalent to a crusher gradient that destroys all the coherence in x-y plane. (b) Simplified circuit with two independent parameters along with passive state $\rho^p_{12}$ conversion.}\label{figs4}
\end{figure}
~\vspace{-.5cm}\\
{\bf Entanglement certification step}\\
Our aim is to certify entanglement of the prepared state through the thermodynamics entanglement criteria obtained in Theorem 1. For that we need to evaluate the following thermodynamic quantity from experiment:
\begin{align}
	\Delta_{1|2}:=W(\rho_{12})-W_{[\kappa]}(\rho_1)-E(\rho_{2})+E^{2}_g.\label{expbd}   
\end{align}
To evaluate $W(\rho_{12})$, we need to apply a proper unitary so that the state $\rho_{12}$ evolves to the lowest energy state ({\it i.e.} the passive state). In case of an unknown state $\rho_{12}$ finding the optimal unitary is quite tedious. However, for the Bell-diagonal state in the following we provide a systematic method to do the same. First we apply CNOT and Hadamard [see Fig.\ref{figs4}(b)] to evolve $\rho_{12}$ to 
\begin{align}
	R_{12} &= \sum_{i,j=0}^{1}p_{ij}\ket{ij}\bra{ij},    
\end{align}
which is diagonal in computational basis $\{\ket{ij}\}_{i,j=0}^1$. In the next step, we have to apply unitary operation that permutes the computational basis that accordingly leads us to the lowest energy state. As it turns out, in this case $24$ different permutations $\textbf{U}_{per}$ are possible that can be represented as $\textbf{U}_{per}\equiv \textbf{U}_{12}^{G}\circ \textbf{U}_{12}^L$, where
\begin{subequations}
	\begin{align}
		\textbf{U}_{12}^{G}\in&\left\{\textbf{I}_{12}, \textbf{CNOT}_1, \textbf{CNOT}_2, \textbf{CNOT}_1 \circ \textbf{CNOT}_2, \textbf{CNOT}_2 \circ  \textbf{CNOT}_1,\textbf{CNOT}_1 \circ  \textbf{CNOT}_2 \circ  \textbf{CNOT}_1\right\},\\  
		\textbf{U}_{12}^{L}\in&\left\{\textbf{I}_1 \otimes \textbf{I}_2,\textbf{I}_1 \otimes \textbf{Y}_2,\textbf{Y}_1 \otimes \textbf{I}_2, \textbf{Y}_1 \otimes \textbf{Y}_2\right\},~\mbox{with}
	\end{align}    
\end{subequations}
\begin{align}
	\textbf{Y}_s:\left\{\!\begin{aligned}
		\ket{0}_s\mapsto\ket{1}_s\\
		\ket{1}_s\mapsto\ket{0}_s
	\end{aligned}\right\};~~\textbf{CNOT}_1:\left\{\!\begin{aligned}
		\ket{00}_{12}\mapsto\ket{00}_{12}\\
		\ket{01}_{12}\mapsto\ket{01}_{12}\\
		\ket{10}_{12}\mapsto\ket{11}_{12}\\
		\ket{11}_{12}\mapsto\ket{10}_{12}\\
	\end{aligned}\right\};~~\textbf{CNOT}_2:\left\{\!\begin{aligned}
		\ket{00}_{12}\mapsto\ket{00}_{12}\\
		\ket{01}_{12}\mapsto\ket{11}_{12}\\
		\ket{10}_{12}\mapsto\ket{10}_{12}\\
		\ket{11}_{12}\mapsto\ket{01}_{12}\\
	\end{aligned}\right\};
\end{align}
and $\textbf{I}$ be the identity operation. The task now boils down to evaluating the energy of the state $\tilde{R}_{12}=\textbf{U}_{per}R_{12}\textbf{U}^\dagger_{per}$ for obtaining the lowest energy state by varying these $24$ permutations. This completes the process of evaluating $W(\rho_{12})$. On the other hand, $W_{[\kappa]}(\rho_1)$ in this case becomes zero as $\rho_{12}$ being the Bell-diagonal state we have $\rho_1=\textbf{I}/2$. Since the system's Hamiltonian is known the other two terms in left hand side of Eq.(\ref{expbd}) can be evaluated immediately. Please note that, during the process we do not need to know the spectral of the given state since the expected energy value suffice the purpose. This experimentally obtained thermodynamic quantity $\Delta^{Expt}_{1|2}$ is plotted in Fig.2(c) of the main manuscript. Now to certify entanglement, we need to evaluate the separability bounds $\delta^G_{1|2}$ and $\delta^I_{1|2}$ for the given state $\rho_{12}$. We can also calculate $\Delta_{1|2}$ analytically to tally with our experiment. For the system in consideration  with $H_{12}^{\mathrm{int}} \approx  -  \omega_F I_{1z} -  \omega_P I_{2z}$ where $J << (\omega_F,\omega_P)$ we have 
\begin{align}
	\left\{\!\begin{aligned}
		x_0= 1/2,~~ x_1 = 1/2,~~ m_0 = 0,~~ m_1 =  \omega_F=\alpha_1,~~ n_0 = 0, \\
		n_1 =  \omega_P=\alpha_2,~~ n_2 =  \omega_F,~~n_3 =  (\omega_F+\omega_P) 
	\end{aligned}\right\};
\end{align}
and accordingly we get
\begin{subequations}
	\begin{align}
		\Delta_{1|2}&= m_1x_1-\left(\sum^3_{i=1}n_it_i\right)= \left[\omega_F/2 - \left(\omega_P t_1+ \omega_F t_2+(\omega_P+\omega_F)t_3\right)\right],\\
		\delta^G_{1|2}&= \sum^3_{i=1}\left(m_1-n_i\right)t_i= \left(\omega_F t_1 -\omega_P(t_1+t_3)\right), \\
		\delta^I_{1|2}&= \frac{\alpha_1-\alpha_2}{2}= (\omega_F -\omega_P)/2,   
	\end{align}
\end{subequations}
where vector $\vec{t}\equiv\{t_k\}_{k=0}^3$ is obtained by arranging $\{p_{ij}\}$ in descending order.Now, in $500$MHz spectrometer we have, $\omega_F = 470.385$MHz and $\omega_P = 202.404$MHz, which thus yield
\begin{subequations}
	\begin{align}
		\Delta_{1|2}&= (1.162 - 2.324t_2-3.324t_3-t_1) \omega_P, \\
		\delta^G_{1|2}&= (1.324t_1 - t_3) \omega_P,\\ 
		\delta^I_{1|2}&= 0.662 \omega_P.
	\end{align}
	\label{supeq:uper}
\end{subequations}
Once again, note that quantity $\delta^I_{1|2}$ is state independent. Whenever, $\Delta^{Expt}_{1|2}>\delta^I_{1|2}$ the state in question is entangled. In Fig.2(c), entanglement of the states with parameters $\beta,\gamma$ lying outside the ``black boundary line" is certified with this criterion. To tally this experimental result we have also plotted the analytically obtained quantity $\Delta_{1|2}$ in Fig.2(b). On the other hand, the quantity $\delta^G_{1|2}$ depends on state parameter and the criterion $\Delta^{Expt}_{1|2}>\delta^G_{1|2}$ becomes stronger than the earlier one as depicted in Fig.2(b),(c),(d).\\\\
\textbf{NMR pulse sequence} \\
\begin{figure}[h]
	\centering
	\includegraphics[trim={0cm 0cm 0cm 0cm},clip=,width=12.5cm,height=5.5cm]{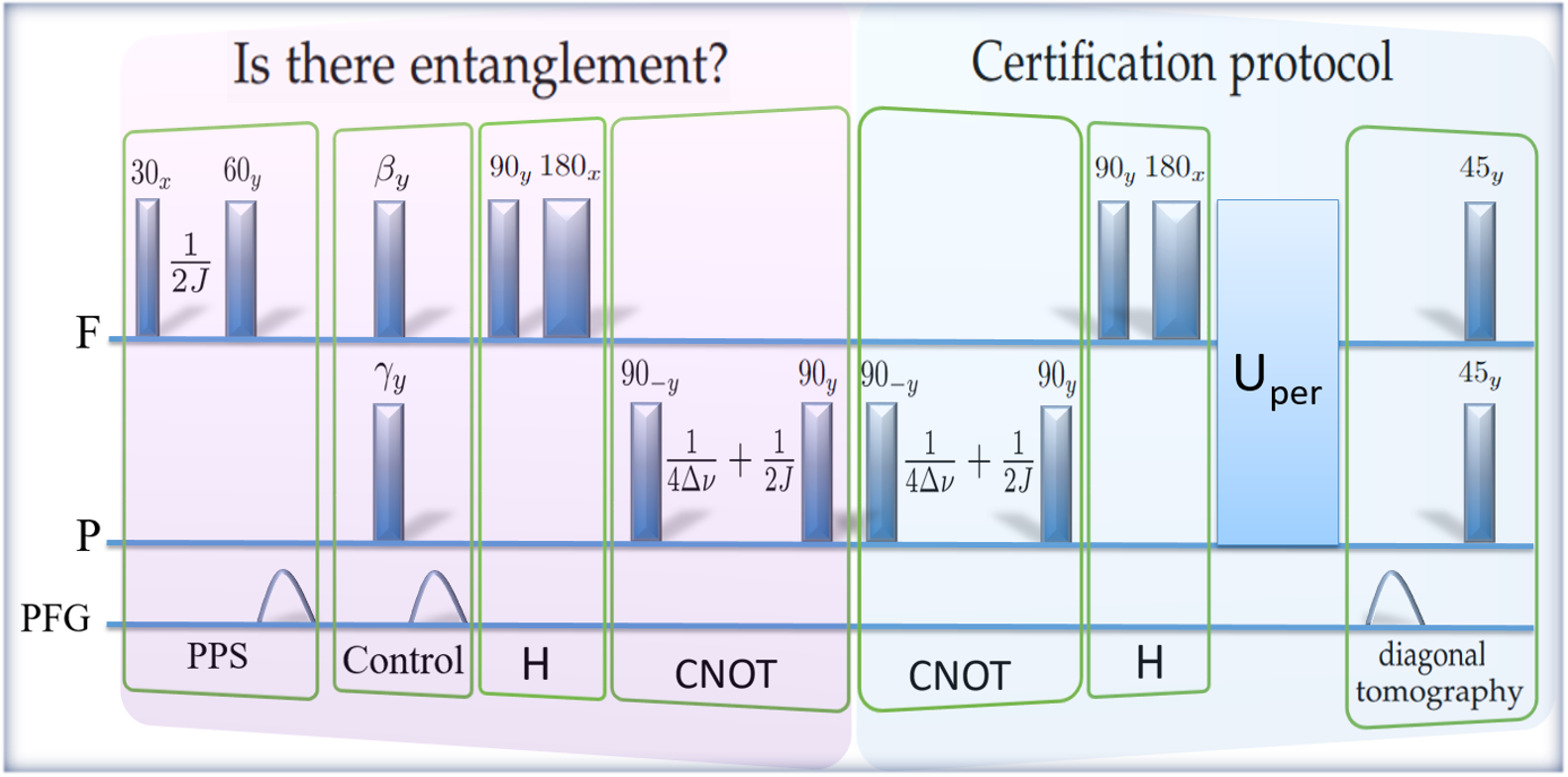}
	\caption{The NMR pulse sequence (same as in Fig. 2a of the main text) to prepare Bell diagonal state with control parameter $\beta$ and $\gamma$. 
		Here, the rectangles represent RF rotations, delays represent free-evolutions, PFG is the Pulsed-Field Gradient, and $\Delta \nu$ is the resonance offset of both $^{19}$F and $^{31}$P. 
		One of the operations [$U_\mathrm{per}\circ H\circ \mathrm{CNOT}$] takes the Bell diagonal state to its passive. }
	\label{sup:fig2}
\end{figure}
The NMR pulse sequence for the two-qubit Bell-diagonal state is shown in Fig. \ref{sup:fig2}.
~\\
\noindent \textbf{Preparation of PPS}\\
We start with the thermal state, which under high-field, high-temperature approximation reads as $\rho_{th} = \mathbf{I}/4 + \epsilon_P (\frac{\gamma_F}{\gamma_P}I_{1z} + I_{2z}$), where  $\epsilon_P = \gamma_PB_0/ 4k_BT$ and $\frac{\gamma_F}{\gamma_P}\approx 4/\sqrt{3}$.  The identity part is invariant under the unitary transformations, neither contributes to the NMR signal, and is ignored henceforth.  The PPS preparation (upto the scaling factor $\epsilon_P$ can be described as follows:
\begin{align}
	\frac{1}{4}\mathbf{I} + \epsilon_P \left[\frac{4}{\sqrt{3}}I_{1z} + I_{2z}\right]
	\xrightarrow[]{30_x^F}&
	\frac{1}{4}\mathbf{I} + \epsilon_P \left[\frac{4}{\sqrt{3}}\frac{\sqrt{3}}{2}I_{1z} - \frac{4}{\sqrt{3}}\frac{1}{2}I_{1y}+ I_{2z} \right]
	\nonumber \\
	\xrightarrow[]{1/2J}&
	\frac{1}{4}\mathbf{I} + \epsilon_P \left[2I_{1z} + \frac{4}{\sqrt{3}} I_{1x}I_{2z}+ I_{2z}\right]
	\nonumber \\
	\xrightarrow[]{60_y^F, ~ \mathrm{PFG}}&
	\frac{1}{4}\mathbf{I} + \epsilon_P \left[I_{1z} - 2I_{1z}I_{2z}+ I_{2z}
	\right]
	\nonumber \\
	&\equiv (1-\epsilon) \frac{1}{4}\mathbf{I} + \epsilon \proj{11}.
\end{align}
~\\
\noindent \textbf{Control of $p_{ij}$} \\
Using notations $C_\theta = \cos \theta$ ~\&~ $S_\theta = \sin \theta$,
\begin{align}
	&\frac{1}{4}\mathbf{I} + \epsilon_P \left[I_{1z} - 2I_{1z}I_{2z}+ I_{2z}
	\right]
	\xrightarrow[]{\beta_y^F,~\gamma_y^P,~\mathrm{PFG}}
	\frac{1}{4}\mathbf{I} + \epsilon_P \left[C_\beta I_{1z} - C_\beta C_\gamma 2I_{1z}I_{2z}+ C_\gamma I_{2z}
	\right]
	\nonumber \\
	&= \frac{1}{4}\mathbf{I} + \epsilon_P
	\frac{1}{2}\left[
	\begin{array}{cccc}
		C_\beta + C_\gamma - C_\beta C_\gamma & & & \\
		& C_\beta - C_\gamma + C_\beta C_\gamma & &  \\
		&& -C_\beta + C_\gamma + C_\beta C_\gamma &  \\
		&&& -C_\beta - C_\gamma - C_\beta C_\gamma \\
	\end{array}
	\right],
	\nonumber \\
	&=
	\frac{1}{4}\mathbf{I} + \epsilon_P
	\frac{1}{2}\left[
	\begin{array}{cccc}
		1-4p_{00} & & & \\
		& 1-4p_{01} & &  \\
		&& 1-4p_{10} &  \\
		&&& 1-4p_{11} \\
	\end{array}
	\right]
	\equiv
	\sum_{i,j \in \{0,1\}} p_{ij} \proj{ij} 
\end{align}
up to the uniform background population within the PPS paradigm. \\

\noindent \textbf{Hadamard operation}\\
The Hadamard operation on $F$ is realized by
\begin{align}
	U_{180_x}U_{90_y} = \exp(-i \pi I_x) \exp(-i \pi I_y/2) = H,~\mbox{up to a global phase}.
\end{align} \\

\noindent \textbf{Implementing CNOT} 
\begin{align}
	U_\mathrm{CNOT} &= 
	U_y U_{zz} U_{z12}^\dagger U_y^\dagger
	\nonumber \\
	&= \exp(-i \pi I_{2y}/2)
	\exp(-i \pi I_{1z}I_{2z})
	\exp(i \pi(I_{1z}+I_{2z})/2)
	\exp(i \pi I_{2y}/2).
\end{align}
In our experiment, local z-rotations $U_{z12}^\dagger$ was realized by introducing a temporary resonance offset of $\Delta \nu = 10000$ Hz for a duration $1/(4\Delta\nu) = 25\mu$s.
Similarly, the bilinear rotation $U_{zz}$ was realized by using the $J$-Hamiltonian $H_J = 2\pi J I_{1z}I_{2z}$ evolving for time $1/(2J)$ as indicated in the pulse sequence. \\

\noindent \textbf{Implementing $U_\mathrm{per}$}\\
The $U_\mathrm{per}$ operator has a sequence of nonlocal CNOT gates and local $90_y$ rotations as described in Eq. \ref{supeq:uper}.  These are implemented in the same way as described in the previous steps.\\

\noindent \textbf{Detection:} \\
The final 45 degree y-pulse allows the measurement of not only linear $I_{iz}$ terms, but also bilinear $I_{1z}I_{2z}$ term:
\begin{align}
	I_{iz} &\xrightarrow{45_y} I_{ix},
	\nonumber \\
	I_{iz}I_{iz} &\xrightarrow{45_y} 
	\frac{1}{\sqrt{2}} \left(I_{ix}I_{iz} + 
	I_{iz}I_{ix}
	\right),
\end{align}
both the cases leading to single-quantum observable NMR signals.  For $90_y$ pulse, the bilinear term becomes $I_{1x}I_{2x}$ which is a combination of zero- and two-quantum coherence and hence not directly observable.

\subsection{Experiment I-b: Two-qubit Werner state}
As of special interest, in this experiment we addresses entanglement certification in two-qubit Werner class of states. For that, we use a two-qubit homo-nuclear spin system of $5$-Bromothiophene-$2$-Carbaldehyde (BRTP) dissolved in Dimethyl Sulphoxide (DMSO) [see Fig. \ref{figs5}(a)]. All the experiments are carried out on a $500$ MHz Bruker NMR spectrometer at an ambient temperature of $300$ K. The Hamiltonian of the system can be written in terms of the internal part and the RF drive as 
\begin{subequations}
	\begin{align}
		H_{12}&= H_{12}^{\mathrm{int}} + 
		H_{12}^{\mathrm{RF}},~~ \mbox{where}\\ 
		H_{12}^{\mathrm{int}}&=  -  \left(\omega_H + \pi\Delta \nu \right) I_{1z} -  \left(\omega_H - \pi\Delta \nu \right) I_{2z} + 2\pi J I_{1z}I_{2z}\nonumber\\ 
		&\approx - \omega_H (I_{1z} + I_{2z}),~~ \mbox{for}~~ (J,\Delta \nu) << \omega_H,~~\mbox{and}\\
		H_{12}^{\mathrm{RF}}&=  \Omega_H(t) \left(I_{1x}+I_{2x}\right).
	\end{align}
\end{subequations}
Here $\omega_H$ denotes the Larmor frequency of proton, $\Delta \nu$ is the chemical shift difference between the two spins, $J$ denotes the scalar coupling constant, and $\Omega_H$ denotes the RF amplitude. In BRTP, $\omega_H = 500.2$ MHz, $\Delta\nu = 192$ Hz and $J = 4.01$ Hz.  Using appropriate RF pulses, we can realize various local rotations and an entangling gate can be implemented through scalar coupling [see Fig. \ref{figs5}(b)].  When applied on a pure superposition state, the entangling gate can produce a singlet pair. Intrusion of white noise via decoherence lead to a noisy singlet state belonging in Werner class 
\begin{align}
	\rho_\lambda:=\lambda\ket{\psi^-}\bra{\psi^-}+(1-\lambda)\mathbf{I}/4,~\lambda\in[0,1], ~~\mbox{where}~~ \ket{\psi^-} = \frac{1}{\sqrt{2}}\left(\ket{01}-\ket{10}\right). 
\end{align}
Now criterion (\textcolor{red}{2b} in manuscript) can certify entanglement for the parameter values $\lambda>1/3$, whereas a state independent thermodynamic criterion can detect entanglement for the parameter ranges $\lambda>\frac{  \omega_H}{2  \omega_H}= 1/2$.
\begin{figure}
	\centering
	\includegraphics[trim={0cm 0cm 0cm 0cm},clip,width=13cm]{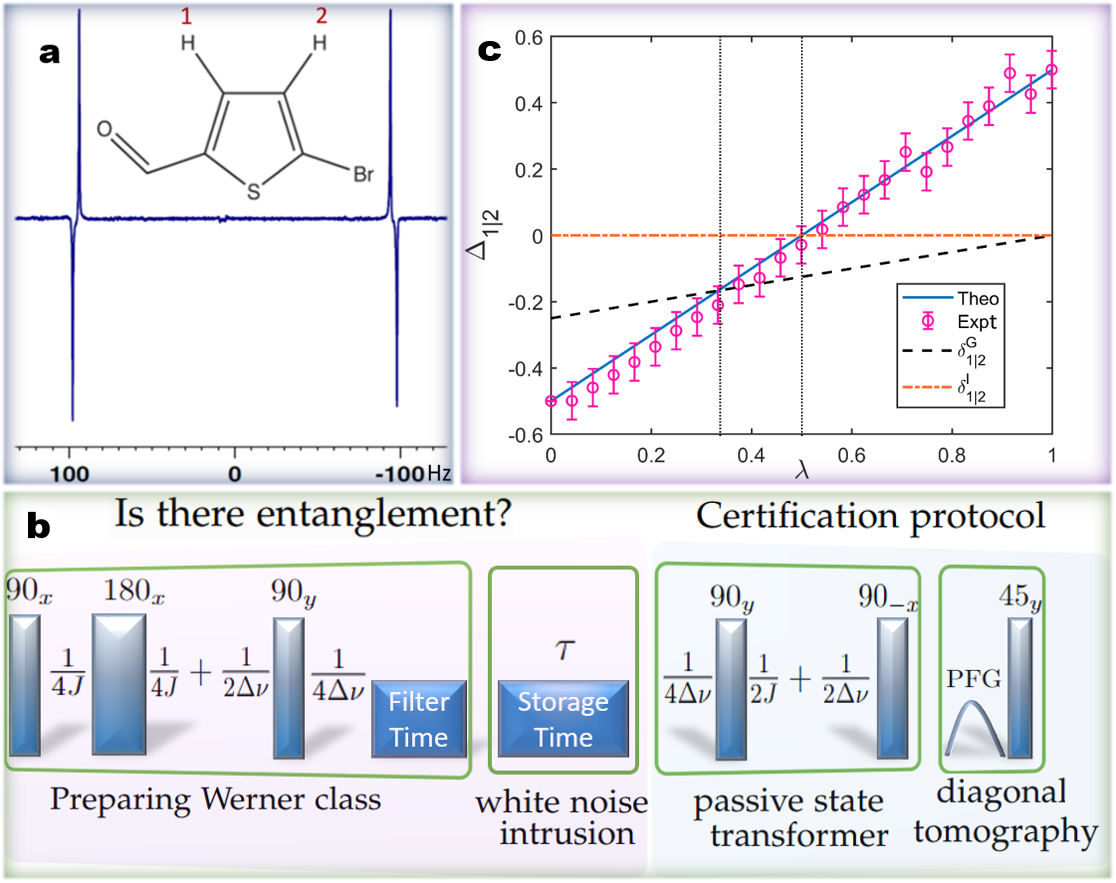}
	\caption{(a) The $^1$H spectrum corresponding to the singlet or Werner state of pair of hydrogen spins BRTP. (b) The NMR pulse sequence to produce Werner class states with controlled purity and then to certify the presence or absence of entanglement.  Here rectangles with $\Theta_{\beta} = e^{-i\Theta (I_{1\beta}+I_{2\beta})}$ represent RF rotations, delays represent free-evolutions, and PFG is the Pulsed-Field Gradient.  Here a short filter time of a few seconds is used to allow the decay of triplet state thereby retaining high-fidelity singlet state \cite{roy2010density}.  The final 45 degree y-pulse allows the measurement of not only linear $I_{iz}$ terms, but also bilinear $I_{1z}I_{2z}$ term.  (c) Plot of $\Delta_{1|2}$ in units of $  \omega_H$ versus purity of the Werner class. The vertical dotted lines indicate the purity threshold for entanglement: the left one marks purity $\lambda = 1/3$, above which $\Delta_{1|2}$ surpasses $\delta^G_{1|2}$ and the state becomes entangled; the right one marks $\lambda = 1/2$, corresponding to the state-independent bound $\delta^I_{1|2}$. The errorbar represents the estimated random error due to noise.}\label{figs5}
	\vspace{-.5cm}
\end{figure}
~\\
{\bf Preparation Step}\\
To prepare the Werner class of state in BRTP system, we start with thermal state, which under high-field, high-temperature approximation ($ \hbar \omega_H << k_BT$) reads as $\rho_{th} = \mathbf{I}/4 + \epsilon_H(I_{1z} + I_{2z})$, where $\epsilon_H =   \omega_H/ 4k_BT$. We then prepare the long-lived singlet state (LLS) by applying suitable pulse sequence as shown in the preparation part of Fig. \ref{figs5}(b),
\begin{align}
	\rho_{LLS} &= \frac{\mathbf{I}}{4} -\epsilon_H (I_{1x}I_{2x}+I_{1y}I_{2y}+I_{1z}I_{2z})\nonumber \\ 
	&= (1-\epsilon_H)\frac{\mathbf{I}}{4} +\epsilon_H\proj{\psi_-}.
	\label{eq:lls}
\end{align}
As the name suggests, LLS outlives any other nonequilibrium state \cite{carravetta2004long}. In BRTP, LLS decay constant $T_{LLS}$ is 12.9 s, which is considerably longer than the individual $T_1$ values (longitudinal relaxation time constants) of $6.1$ s and $8.2$ s.  Thus, after the filter time shown in Fig. \ref{figs5}(b), we obtain a clean state in the above form of Eq. (\ref{eq:lls}) \cite{roy2010density}.
Since LLS is isomorphic to a pure singlet, we invoke the paradigm of pseudo-pure states \cite{cory1997ensemble} and set $\epsilon_H=1$. Thus  our reference state at time $t=0$ is $\rho_{\lambda(0)} \equiv \proj{\psi^-}$.  During the storage time, white noise intrusion takes place and the singlet state gradually becomes more and more mixed Werner state, i.e., $\rho_{\lambda(\tau)} =  (1-\lambda(\tau))\mathbf{I}/4+\lambda(\tau)\ket{\psi^-}\bra{\psi^-}$, where $\lambda(\tau) =  e^{-\tau/T_{LLS}}$ is the purity factor of Werner class.\\\\
{\bf Entanglement certification step}\\
To evaluate global ergotropic work $W[\rho_{\lambda(\tau)}]$ we unitarily drive the system to its passive state by applying appropriate pulse sequence as shown in the $2^{nd}$ part of Fig.\ref{figs5}(b)
\begin{align}   &\rho_{\lambda(\tau)}=\eta(\tau)\mathbf{I}+\lambda(\tau)[\mathbf{I}/4-(I_{1x}I_{2x}+I_{1y}I_{2y}+I_{1z}I_{2z})]\nonumber\\&\xrightarrow[]{\frac{1}{4\Delta \nu}}\eta(\tau)\mathbf{I}+\lambda(\tau)[\mathbf{I}/4-(-I_{1x}I_{2y}+I_{1y}I_{2x}+I_{1z}I_{2z})]\nonumber\\
	&\xrightarrow[]{90_y^{(1,2)}}\eta(\tau)\mathbf{I}+\lambda(\tau)[\mathbf{I}/4-(I_{1z}I_{2y}-I_{1y}I_{2z}+I_{1z}I_{2z})]\nonumber\\
	&\xrightarrow[]{\frac{1}{2J}}\eta(\tau)\mathbf{I}+\lambda(\tau)[\mathbf{I}/4-(-I_{1x}+I_{2x}+2I_{1x}I_{2x})/2]\nonumber\\
	&\xrightarrow[]{\frac{1}{2\Delta \nu}}\eta(\tau)\mathbf{I}+\lambda(\tau)[\mathbf{I}/4-(I_{1y}+I_{2y}-2I_{1y}I_{2y})/2]\nonumber\\
	&\xrightarrow[]{90_{-x}^{(1,2)}}\eta(\tau)\mathbf{I}+\lambda(\tau)[\mathbf{I}/4-(-I_{1z}-I_{2z}-2I_{1z}I_{2z})/2]\nonumber\\
	&=\mbox{diag}\left(1-3\eta(\tau),\eta(\tau),\eta(\tau),\eta(\tau)\right)=\rho^p_\lambda(\tau),
\end{align}
where $\eta(\tau):=(1-\lambda(\tau))/4$, $90_{x/y}$ represents a $\pi/2$ rotation with rotation axis along $x/y$, ($1/2J$) and ($1/4J$) delays represent the free evolution under the scalar coupling, ($1/2\Delta\nu$) $\equiv$ $90^1_z90^2_{-z}$ and ($1/4\Delta\nu$) $\equiv$ $45^1_z45^2_{-z}$ represent the evolution under the chemical shift and PFG is the crusher gradient used in dephasing the coherence. For a finite storage time the system remains in the passive state having energy $E(\rho_\lambda^p(\tau))=\Tr(H_{AB}^{\mathrm{int}} \rho_\lambda^p(\tau)) \simeq - \omega_H \lambda(\tau)$, where $H_{AB}^{\mathrm{int}} \approx - \omega_H (I_{1z} + I_{2z}),~~ \mbox{for}~~ (J,\Delta \nu) << \omega_H$ and it can be estimated simply by measuring the diagonal elements of $\rho_\lambda^p(\tau)$ following the procedure shown in the last part of Fig. \ref{figs5}(b).

For singlet state, the marginal states are maximally mixed and no work can be extracted under a cyclic unitary operation. Therefore, using Eq. (\ref{thermquant}), we have $\Delta_{1|2}= \omega_H\lambda(\tau)- \omega_H/2$. Furthermore, since $m_1 =  \omega_H$ , $t_1 = t_2 = t_3 = (1-\lambda(\tau))/4$, we therefore have $\delta^G_{1|2}= (\lambda(\tau)-1)  \omega_H/4$.
In this case, the spectral independent bound turns out to be $\delta^I_{1|2} = 0$. In Fig.\ref{figs5} (c) theoretically estimated (solid line) as well as experimentally estimated (circles) values of  $\Delta_{1|2}$ are plotted against the purity factor $\lambda(\tau) = e^{-\tau/T_{LLS}}$. The vertical dashed and dotted-dashed lines correspond to $\delta^G_{1|2}$ and  $\delta^I_{1|2}$, respectively. For the region where $\Delta_{1|2}$ exceeds those bounds the state is entangled.

\subsection{Experiment II: Multi-qubit systems}
The STR Hamiltonian is given by
\begin{subequations}
	\begin{align}
		H_{STR} &= H_{STR}^{\mathrm{int}} + H_{STR}^{RF} \nonumber,~~\mbox{where,}\\
		H_{STR}^{\mathrm{int}} & =
		-  \omega_C I_{1z} - \omega_A \sum_{i=2}^N I_{iz} + 2\pi J I_{1z} \sum_{i=2}^N I_{iz}
		\\
		H_{STR}^{RF} & = \Omega_C(t) I_{1x} + \Omega_A(t) \sum_{i=2}^N I_{ix}.
	\end{align}
\end{subequations}
Here $\omega_C$ and $\omega_A$ are the Larmor frequencies of C and A spins respectively, $J$ is the scalar coupling constant between C and A spins (see Fig. \textcolor{red}{1}(b) in the manuscript), and $\Omega_C(t)$, $\Omega_A(t)$ are the time-dependent RF drives on C and A respectively.\\

{\bf Preparation step}\\
The initial thermal state of an STR is of the form
\begin{align*}
	\rho_{th} = \frac{\mathbf{I}}{2^N} + \epsilon_A
	\left(
	\frac{\gamma_C}{\gamma_A} 
	I_{1z} + \sum_{i=2}^N I_{iz}
	\right),
\end{align*}
where $\epsilon_A =   \gamma_A B_0 /(2^Nk_B T)$.  Starting from the thermal state a $\theta_y$ pulse on all the qubits followed by PFG results in the control over purity, i.e.,
\begin{align*}
	\rho_{th} & ~~\stackrel{\theta_y^{C,A}}{\longrightarrow}
	~~\stackrel{\mathrm{PFG}}{\longrightarrow} ~~\frac{\mathbf{I}}{2^N} + \epsilon_A \cos{\theta}
	\left(
	\frac{\gamma_C}{\gamma_A} 
	I_{1z} + \sum_{i=2}^N I_{iz}
	\right)\nonumber\\
	&\rightarrow
	\frac{\mathbf{I}}{2^N} +  \epsilon_A\cos{\theta}
	\sum_{i=2}^N I_{iz},
\end{align*}
where we have ignored the first-qubit component, since it does not lead to the GHZ state (and gets filtered away by the subsequent PFG pulses).
On further applying INEPT followed by CNOT, we obtain
\begin{align*}
	\rho_{Nq_h} &= \sum_{q_h} r_{q_h}\left[(
	1-\epsilon_A \cos{\theta})
	\frac{\mathbf{I}}{2^N} + \epsilon_A \cos{\theta} ~\rho_{q_h}
	\right],
\end{align*}
where $\sum_{q_h} r_{q_h} = 1$, with $\rho_{q_h}$ describing a set of entangled states of coherence order $q_h =
N - 2h$. Note that $q_h
\in \{N, N - 2, . . . ,-N + 2\}$ for $h \in
\{0, 1, . . . , N - 1\}$. Using a pair of PFGs, we can filter out GHZ class ($q_h=N$, $h=0$) 
by choosing lopsidedness 
\begin{align}   
	l_0 = 1+(N-1)\frac{\gamma_A}{\gamma_C}
\end{align}
\cite{mahesh2021star,shukla2014noon}. Invoking the paradigm of pseudopure states by setting $r_{N} = 1$ and $\epsilon_A=1$, we obtain
\begin{align}
	\rho_{N}(\lambda(\theta)) &= (1-\lambda(\theta))\frac{\mathbf{I}}{2^N} + \lambda(\theta) \proj{\psi_{N}},~~\mbox{where} ~~ \ket{\psi_{N}}:=\frac{1}{\sqrt{2}}(\ket{0}_C\ket{0}^{\otimes (N-1)}_A+\ket{1}_C\ket{1}^{\otimes (N-1)}_A)
	\label{nq0}
\end{align}
is the $N$-qubit GHZ state with purity control
$\lambda(\theta) = \cos{\theta}$.
This completes the preparation protocol of GHZ class of states.\\

{\bf Entanglement certification step}\\
The entanglement certification step involves transforming GHZ 
class of states to their corresponding passive states. This is accomplished by another CNOT gate followed by a $90_{y}$ pulse (which cancels with a subsequent readout pulse) as depicted in Fig.\ref{figs6}.  A pair of PFG pulses of relative ratio $-l_0$ is used to filter the GHZ state
for which
the passive state reads as
\begin{align*}
	\rho_{N}^p(\lambda(\theta))&= [1-\lambda(\theta)]\frac{\mathbf{I}}{2^N}+\lambda(\theta) \proj{0}_C \otimes \proj{0}_A^{\otimes (N-1)}.
\end{align*}
We have carried out the experiments on two systems as shown in Fig. \textcolor{red}{1} in the manuscript. This final spectrum corresponding to the GHZ class contains a single transition at frequency $-J(N-1)/2$ \cite{mahesh2021star}. Comparing the final intensity with that corresponding to $\theta=0$ we have a direct estimation of the purity factor $\lambda(\theta)$.

\subsubsection{Three-qubit system} 
\begin{figure}[t!]
	\centering
	\includegraphics[trim={0cm 0cm 0cm 0cm},clip=,width=10cm]{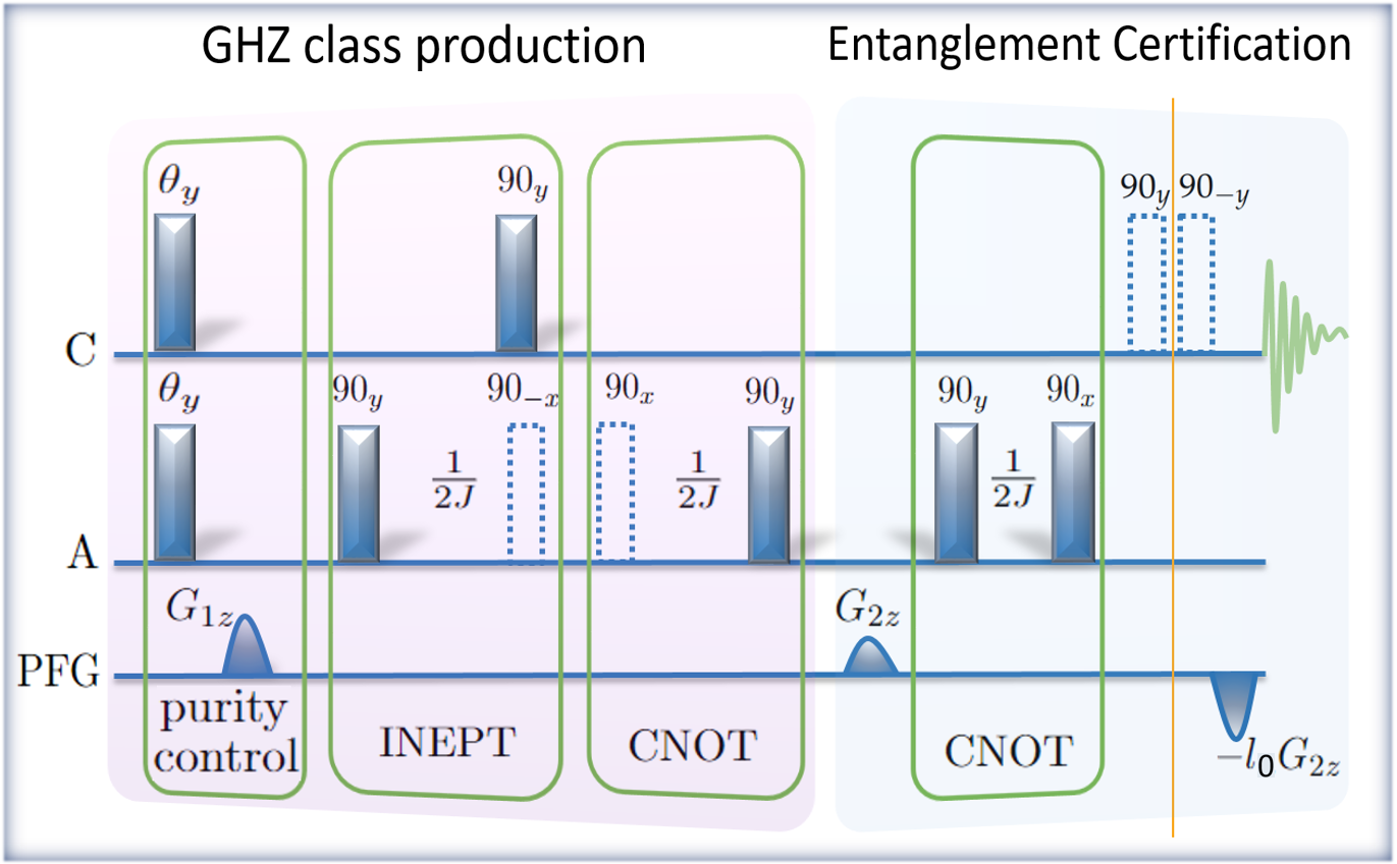}
	\caption{The NMR pulse sequence to prepare GHZ class states with controlled purity on an STR and then certify the presence or absence of entanglement.  The vertical line shows the instant when the passive state is created. The dashed pulses cancel each other.}\label{figs6}
\end{figure}

We have 
used FAN (fluoroacetonitril; Fig. \textcolor{red}{1} (d) in the manuscript), 
wherein $^{19}$F spin is the central qubit and two $^1$H spins are the satellite qubits.  Here $J_{FH} = 11.04$ Hz. Now following Eq. (\textcolor{red}{10a}) and Eq.(\textcolor{red}{11}) we have our Hamiltonian and initial global state as $H_{STR}^{\mathrm{int}} \approx -  \omega_F I_{1z} - \omega_H \sum_{i=2}^3 I_{iz}$ as $J<<(\omega_H,\omega_F)$ and $\rho_{N=3}(\lambda(\theta))$. Furthermore, using Eq.(\textcolor{red}{1}) we have 
\begin{align}
	\Delta_{1|23}&=W(\rho_{N=3}(\lambda(\theta)))-W(\rho_1)-E(\rho_{23})+E^{23}_g\nonumber \\
	&=[E(\rho_{N=3}(\lambda(\theta)))-E(\rho^p_{N=3}(\lambda(\theta)))]-[E(\rho_1)-E(\rho^p_1)]-E(\rho_{23})+E^{23}_g \nonumber \\
	&=-E(\rho^p_{N=3}(\lambda(\theta)))+E^{23}_g \nonumber \\
	&=\lambda(\theta)( \omega_F+2 \omega_H)/2-  \omega_H ,
	\label{delta1}
\end{align}
where $E(\rho^p_{N=3}(\lambda(\theta))) = tr(H_{STR}^{\mathrm{int}}\rho^p_{N=3}(\lambda(\theta))) = -\lambda(\theta)( \omega_F+2 \omega_H)/2$, $E^{23}_g = -  \omega_H$ is the ground state energy of the second and third qubit and $E(\rho^p_1) = 0$ . Now following Eq.(\ref{delta123}) and using values $\alpha_1 =  \omega_P$, $\alpha =  \omega_H$ and $t_1 = ...= t_7 = (1-\lambda(\theta))/8$ we have
\begin{align}
	\delta^G_{1|23} = (\lambda(\theta)-1)( \omega_H - 3 \omega_F/8)
	\label{deltab1}
\end{align}
and following Eq.(\ref{SIB3}) we have spectral independent bound as $\delta^I_{1|23} = 0$. Fig.\textcolor{red}{3}(b) in the manuscript shows both $\delta^G_{1|23}$ and $\delta^I_{1|23}$ marking $\Delta_{1|23}$ at $\lambda = 3/7$ and $\lambda = 0.68$ and for any value of $\lambda$ above them the state $\rho_{N=3}(\lambda(\theta))$ is said to be entangled.
\subsubsection{Ten-qubit system}
We have 
used TMP (trimethyl-phosphate; Fig. \textcolor{red}{1} (c) in the manuscript), 
wherein $^{31}$P spin is the central qubit and nine $^1$H spins are the satellite qubits.  Here $J_{PH} = 11.04$ Hz. Following Eq. (\textcolor{red}{10a}) and Eq. (\textcolor{red}{11}) we have our Hamiltonian and initial global state as $H_{STR}^{\mathrm{int}} \approx -  \omega_P I_{1z} - \omega_H \sum_{i=2}^{10} I_{iz}$ as $J<<(\omega_H,\omega_P)$ and $\rho_{N=10}(\lambda(\theta)) $ for GHZ class we have 
\begin{align}
	\Delta_{1|1^c}&=W(\rho_{N=10}(\lambda(\theta)))-W(\rho_1)-E(\rho_{1^c})+E^{1^c}_g\nonumber \\
	&=[E(\rho_{N=10}(\lambda(\theta)))-E(\rho^p_{N=10}(\lambda(\theta)))]-[E(\rho_1)-E(\rho^p_1)]-E(\rho_{1^c})+E^{1^c}_g \nonumber \\
	&=-E(\rho^p_{N=10}(\lambda(\theta)))+E^{1^c}_g \nonumber \\
	&=\lambda(\theta)( \omega_P+9 \omega_H)/2-9  \omega_H/2,     \label{delta1}
\end{align}
where $E(\rho^p_{N=10}(\lambda(\theta))) = tr(H_{STR}^{\mathrm{int}}\rho^p_{N=10}(\lambda(\theta))) = -\lambda(\theta)( \omega_P+9 \omega_H)/2$, $E^{1^c}_g = -9  \omega_H/2$ is the ground state energy of rest of the 9 qubits and $E(\rho^p_1) = 0$. Now following Eq.(\ref{delta10}) and using values $\alpha_1 =  \omega_P$, $\alpha =  \omega_H$ and $t_1 = t_2 =...= t_{2^{10}-1} = (1-\lambda(\theta))/2^{10}$ we have
\begin{align}
	\delta^G_{1|1^c} = (\lambda(\theta)-1)(9 \omega_H/2 - (2^9-1) \omega_P/2^{10})
\end{align}
and following Eq.(\ref{SIB10}) we have spectral independent bound as $\delta^I_{1|1^c} = 0$. Fig \textcolor{red}{3}(d) in the manuscript shows both $\delta^G_{1|1^c}$ and $\delta^I_{1|1^c}$ marking $\Delta_{1|1^c}$ at $\lambda = 0.499$ and $\lambda = 0.957$  and if any value of $\lambda$ is above the marked values then $\rho_{N=10}(\lambda(\theta)) $ is entangled.

\subsection{Experiment III: Global vs Global-Local Separability bounds}
In this experiment we aim to compare the efficacy of entanglement certification of three-qubit noisy GHZ state through Global-Local and Global separability bounds as stated in Theorem 1. In our earlier experiments when we were evaluation $\Delta_{X|X^c}$, we considered the $X$ part consisting only one subsystem. Accordingly the Global-Local separability bound and Global separability bound become identical. In this case we consider a three-qubit system with $X$ consisting of two qubits. In experiment we use Di-Bromo Fuloromethane (DBFM) [see Fig. \textcolor{red}{2}(e) in manuscript] as the three-qubit system. The Hamiltonian of the system can be written in terms of the internal part and the RF drive as 
\begin{subequations}
	\begin{align}
		H_{123}&= H_{123}^{\mathrm{int}} + 
		H_{123}^{\mathrm{RF}},~~\mbox{where}\\
		H_{123}^{\mathrm{int}}&=  -  \omega_H I_{1z} -  \omega_C I_{2z}-  \omega_F I_{3z} + 2\pi J_{HC} I_{1z}I_{2z}+ 2\pi J_{CF} I_{2z}I_{3z}+ 2\pi J_{HF} I_{1z}I_{3z} \nonumber\\
		H_{123}^{\mathrm{RF}}&=  \Omega_H(t) I_{1x}+ \Omega_C(t) I_{2x}+ \Omega_F(t) I_{3x},\\
		H_{12}^{\mathrm{int}}&=  -  \omega_H I_{1z} -  \omega_C I_{2z} + 2\pi J_{HC} I_{1z}I_{2z} ,~~\mbox{where}~~X\equiv12.
	\end{align}
\end{subequations}
\begin{figure}[t!]
	\centering
	\includegraphics[trim={0cm 0cm 0cm 0cm},clip=,width=10cm]{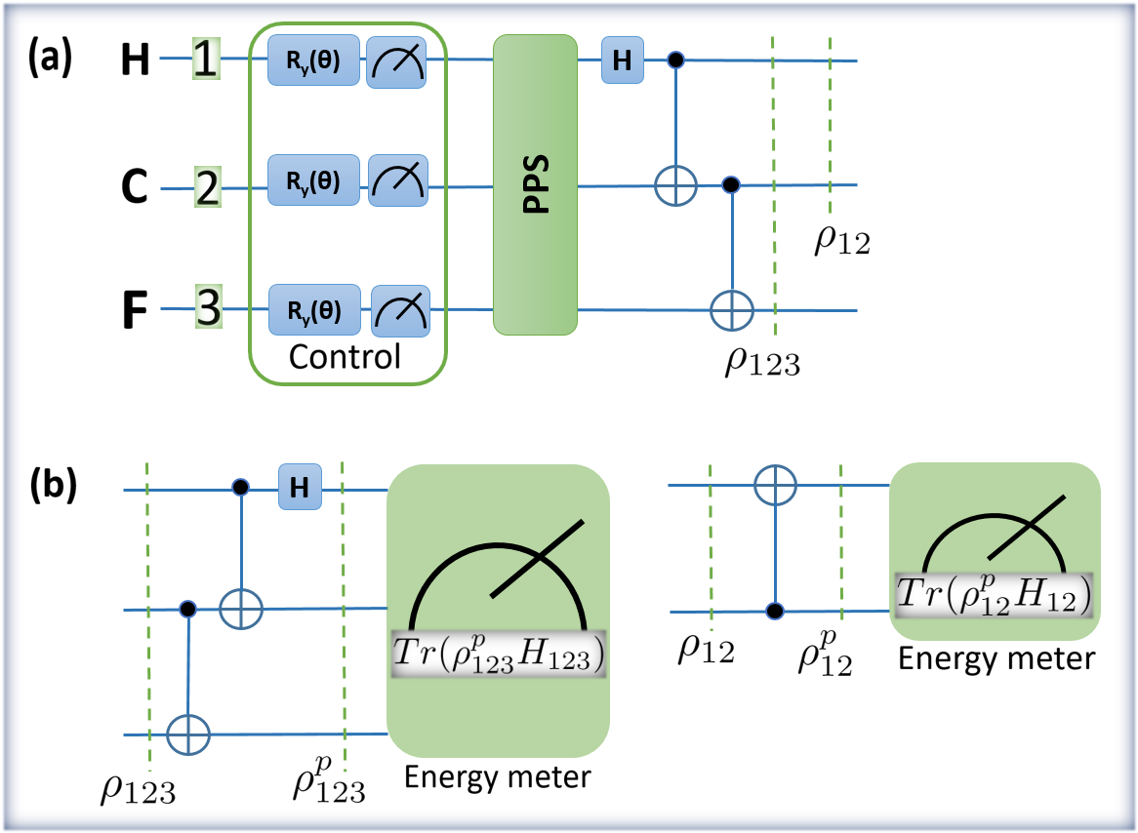}
	\caption{(a) Quantum circuit to prepare 3-qubit GHZ state $\rho_{123}$ with control $\theta$, here PPS stands for pulse sequence to prepare psuedo-pure state from thermal equilibrium.(b) Quantum circuit for global and local passive state conversion and to measure their corresponding energies.}  \label{figs7}
\end{figure}
Here, $H,C,~\& F$ correspond to first, second, and third system respectively.\\\\
{\bf Preparation step}\\
We start with the initial thermal state 
\begin{align}
	\sigma^{th}_{123} =\frac{1}{8}\mathbf{I}+ \epsilon_H \left(I_{2z}+\frac{\gamma_C}{\gamma_H}I_{2z} + \frac{\gamma_F}{\gamma_H}I_{3z}\right),~~\mbox{where}~~ \epsilon_H =   \gamma_HB_0/ 4k_BT.  
\end{align}
After application of the control sequence as shown in Fig. \ref{figs7}(a) and the PPS pulse sequence \cite{krithika2021observation}, we obtain the state 
\begin{align}
	\sigma^\epsilon_{123}(\lambda) = \frac{1-\epsilon\lambda}{8}\mathbf{I}+ \epsilon\lambda\ket{000}\bra{000},   
\end{align}
where $\lambda =\cos\theta$. Within the PPS paradigm taking $\epsilon = 1$ we have 
\begin{align}
	\sigma_{123}(\lambda)= \frac{1-\lambda}{8}\mathbf{I}+\lambda\ket{000}\bra{000}.
\end{align}
Now, consecutively applying Hadamard on $H$,  CNOT with $H$ being the control and $C$ being the target, and CNOT ($C$ control and $F$ target) [Fig. \ref{figs7}(a)], we prepare the noisy GHZ state
\begin{align}
	\rho_{123}(\lambda) &=\frac{1-\lambda}{8}\mathbf{I}+ \lambda\ket{\psi}_{GHZ}\bra{\psi},
\end{align}
where $\ket{\psi}_{GHZ}:=(\ket{000}+\ket{111})/\sqrt{2}$.\\\\
{\bf Entanglement certification step}\\
To evaluate the quantity $\Delta_{12|3}$ we will first evaluate $W(\rho_{123}(\lambda))$. For that we unitarily evolve $\rho_{123}(\lambda)$ to the corresponding lowest energy state. While for a generic state the optimization over the unitary is a tedious task, however for this class of state it can be achieved by applying a single unitary operation [see Fig. \ref{figs7}(b)]. The resulting state reads as 
\begin{align}
	\rho^p_{123}(\lambda) &=\frac{1-\lambda}{8}\mathbf{I}+ \lambda\ket{000}\bra{000},
	\label{3q_global}
\end{align}
which in turn will yield $W(\rho_{123}(\lambda))=E(\rho_{123}(\lambda))-E(\rho^p_{123}(\lambda))$. To evaluate $W_{[12]}(\rho_{12})$ we evolve the subsystem $12$ to the corresponding lowest energy state $\rho^p_{12}$ [see Fig. \ref{figs7}(c)]. We have
\begin{subequations}
	\begin{align}
		\rho_{12}&= \frac{1+\lambda}{4}\ket{00}\bra{00} +\frac{1-\lambda}{4}\ket{01}\bra{10} +\frac{1-\lambda}{4}\ket{10}\bra{01}+\frac{1+\lambda}{4}\ket{11}\bra{11}, \\ 
		\rho^p_{12}&= \frac{1+\lambda}{4}\ket{00}\bra{00} +\frac{1+\lambda}{4}\ket{01}\bra{01} +\frac{1-\lambda}{4}\ket{10}\bra{10}+\frac{1-\lambda}{4}\ket{11}\bra{11},
		\label{3q_local}
	\end{align}
\end{subequations}
where $\rho_{12}=\tr_3[\rho_{123}]$. Since the system Hamiltonian is known we can eventually calculate energy of the third subsystem $E(\rho_3)$, its ground state energy $E^3_g$, and obtain $\Delta^{Exp}_{12|3}$ experimentally. We can also calculate this quantity theoretically. For the system in consideration with $(J_{HC},J_{CF},J_{HF})<<(\omega_H,\omega_C,\omega_F)$ we have 
\begin{align}
	\left\{\!\begin{aligned}
		x_0=\frac{1+\lambda}{4},~~x_1=\frac{1+\lambda}{4},~~x_2=\frac{1-\lambda}{4},~~x_3=\frac{1-\lambda}{4},~~
		t_0=\frac{1+7\lambda}{8},\\
		~\{t_1, \cdots,t_7\} = \frac{1-\lambda}{8},~~m_0 = 0,~~m_1 =  \omega_C,~~ m_2 =~~  \omega_H,~~
		m_3 =  (\omega_H+\omega_C),\\ ~~n_0 = 0,~~ n_1 =  \omega_C, ~~
		n_2 =  \omega_F,~~ n_3 =  \omega_H,~~ n_4 =  (\omega_C+\omega_F),\\ n_5 =  (\omega_C+\omega_H),~~ n_6 =  (\omega_F+\omega_H),~~ n_7 =  (\omega_C+\omega_F+\omega_H),\\
		\omega_H = 500\mbox{MHz},~~\omega_C = 125.721\mbox{MHz},~~\omega_F= 470.385\mbox{MHz}
	\end{aligned}\right\}.   
\end{align}
These yield us
\begin{subequations}
	\begin{align}
		\Delta_{12|3} &= \sum^3_{j=1} m_jx_j-\sum^7_{j=1}n_jt_j=  (\omega_C+\omega_F)\lambda/2- \omega_F/2,\\
		\delta^{GL}_{12|3} &= \sum^3_{j=1} (m_j-m_1)x_j+\sum^7_{j=1}(m_1-n_j)t_j=  (4\omega_F -\omega_C)\lambda/8+ (\omega_C -4\omega_F)/8, \\
		\delta^G_{12|3} &= \sum^2_{j=1} (m_j-n_j)t_j+\sum^7_{j=3}(m_3-n_j)t_j=  (\omega_H+\omega_C-2\omega_F)(1-\lambda)/4,\\
		\delta^I_{12|3}&= (2\omega_H-2\omega_F +\omega_C)/5= 0.0743 \omega_H.
	\end{align}
\end{subequations}
These quantities along with $\Delta^{Exp}_{12|3}$ are plotted in Fig.4 of the main manuscript.

\twocolumngrid

\end{document}